\setlist[itemize]{itemsep=2pt, topsep=2pt}
\setlist[enumerate]{itemsep=2pt, topsep=2pt}
\newcommand{\eat}[1]{} \def\e#1{\emph{#1}}
\def\allreals{\mathbb{R}}
\def\allnaturals{\mathbb{N}}
\def\set#1{\mathord{\{#1\}}}
\def\eqdef{\mathrel{\stackrel{\textsf{\tiny def}}{=}}}
\def\pardim{\mathsf{pardim}}
\newtheorem{theorem}{Theorem}[section]
\newtheorem{definition}[theorem]{Definition}
\newtheorem{proposition}[theorem]{Proposition}
\newtheorem{lemma}[theorem]{Lemma}
\newtheorem{examplethm}[theorem]{Example}
\newenvironment{example}{\begin{examplethm}\em}
{\qed\end{examplethm}}
\newenvironment{repeatresult}[2]
{\vskip0.5em\par\textsc{#1 #2.\!\!\!}\em}
{\vskip1em}
\newenvironment{repproposition}[1]{\begin{repeatresult}{Proposition}{#1}}{\end{repeatresult}}
\newenvironment{reptheorem}[1]{\begin{repeatresult}{Theorem}{#1}}{\end{repeatresult}}
\newcommand{\algname}[1]{{\sf #1}}
\def\myrulewidth{3.20in}
\def\therule{\rule{\myrulewidth}{0.2pt}}
\newenvironment{insidecode}[3]
{
%\small
\begin{tabular}{p{\myrulewidth}}
%\toprule
\multicolumn{1}{c}{\rule{0mm}{3mm}{\bf #3} $\algname{#1}(\mbox{#2})$\vspace{-0.6em}}\\
\therule\vskip-0.8em\therule
\vspace{-1em}
\begin{algorithmic}[1]}
{\end{algorithmic}
\vskip-0.3em\therule
\end{tabular}}
\def\D{\mathcal{D}}
\def\arity{\mathrm{arity}}
\def\tup#1{\mathbf{#1}}
\def\scs{\mathcal{S}}
\def\P{\mathcal{P}}
\def\G{\mathcal{G}}
\def\E{\mathcal{E}}
\def\I{\mathcal{I}}
\def\D{\mathcal{D}}
\def\F{\mathcal{F}}
\def\arity{\mathsf{arity}}
\def\rel#1{\mathrm{#1}}
\def\att#1{\mathit{#1}}
\def\lfp{\textsf{min-sol}}
\def\flfp{\textsf{min-sol}^{\mathsf{fin}}}
\def\inflfp{\textsf{min-sol}}
\def\distname#1{\mathsf{#1}}
\def\dla{\,\,\leftarrow\,\,}
\def\dcom{\,,\,}
\def\val#1{\text{$\mathsf{#1}$}}
\def\weight{\mathrm{weight}}
\def\leaves{\mathrm{leaves}}
\newcounter{tbsnr}
\newenvironment{tbs}
{\addtocounter{tbsnr}{1}\par\bigskip \noindent\fbox{\thetbsnr}
\hspace{2mm}\begin{minipage}{.9\linewidth}\tt \small}
{\end{minipage}\hspace*{\fill}\bigskip}
\newcommand{\tododone}[1]{}
\newcommand{\EDatalog}{\text{\textup{Datalog}$^\exists$}\xspace}
\newcommand{\DDatalog}{\text{GDatalog\textup{[}$\Delta$\textup{]}}\xspace}
\newcommand{\sol}{\Omega}%{\textup{outcomes}}
\newcommand{\fsol}{\sol^{\mathsf{fin}}}%{\textup{outcomes}^{\mathsf{fin}}}
\newcommand{\csol}[1]{\sol^{#1\subseteq}}%{\textup{outcomes}^{#1\subseteq}}
\newcommand{\maxpaths}{\mathit{paths}}
\def\prw{\mathbf{P}}
\def\partitle#1{\vskip1em \noindent \textbf{#1.}\,\,\,\,}
\newenvironment{citemize}
{\vskip0.4em\begin{compactitem}}
{\end{compactitem}\vskip0.4em}
\begin{document}

%\conferenceinfo{??}{??}

\title{Declarative Statistical Modeling with Datalog}
%\subtitle{\textsf{\large [\textit{New Formal Framework} Paper]}}

\def\ausp{\alignauthor}%\quad\quad}
\def\aff#1{\vskip0.1em\textrm{#1}}
\numberofauthors{5} 
\author{
\alignauthor 
Vince Barany
\aff{LogicBlox, Inc.}
%\and
\ausp
Balder ten Cate
\aff{LogicBlox, Inc.}
%\\ \& UC Santa Cruz
\ausp 
Benny Kimelfeld
\aff{LogicBlox, Inc.}
%\and
\ausp
Dan Olteanu
\aff{Oxford University\\
\& LogicBlox, Inc.
}
%\and
\ausp
Zografoula Vagena
\aff{LogicBlox, Inc.}
}

\maketitle

\begin{abstract}
\parbox{3.3in}{\small
Formalisms for specifying general statistical models, such as
probabilistic-programming languages, typically consist of two
components: a specification of a stochastic process (the prior), and a
specification of observations that restrict the probability space to a
conditional subspace (the posterior).  Use cases of such formalisms
include the development of algorithms in machine learning and
artificial intelligence. We propose and investigate a declarative
framework for specifying statistical models on top of a database,
through an appropriate extension of Datalog.  By virtue of extending
Datalog, our framework offers a natural integration with the database,
and has a robust declarative semantics (that is, semantic independence
from the algorithmic evaluation of rules, and semantic invariance
under logical program transformations).

Our proposed Datalog extension provides convenient mechanisms to
include common numerical probability functions; in particular,
conclusions of rules may contain values drawn from such functions.  The
semantics of a program is a probability distribution over the possible
outcomes of the input database with respect to the program; these
possible outcomes are minimal solutions with respect to a related
program that involves existentially quantified variables in
conclusions. Observations are naturally incorporated by means of
integrity constraints over the extensional and intensional
relations. We focus on programs that use discrete numerical
distributions, but even then the space of possible outcomes may be
uncountable (as a solution can be infinite).  We define a probability
measure over possible outcomes by applying the known concept of
cylinder sets to a probabilistic chase procedure. We show that the
resulting semantics is robust under different chases. We also identify
conditions guaranteeing that all possible outcomes are finite (and
then the probability space is discrete).  We argue that the framework
we propose retains the purely declarative nature of Datalog, and
allows for natural specifications of statistical models.
}
\end{abstract}

%%%%%%%%%%%%%%%%%%%%%%%%%%%%%%%%%%
%%%%%%%%%%%%%%%%%%%%%%%%%%%%%%%%%%
%\input{sections/introduction.tex}
\section{Introduction}
Formalisms for specifying general statistical models are commonly used
for developing machine learning and artificial intelligence algorithms
for problems that involve inference under uncertainty. A substantial
scientific effort has been made on developing such formalisms and
corresponding system implementations. An intensively studied concept
in that area is that of \e{Probabilistic
  Programming}~\cite{DBLP:conf/popl/Goodman13} (PP), where the idea is
that the programming language allows for building general random
procedures, while the system \e{executes} the program not in the
standard programming sense, but rather by means of \e{inference}.
Hence, a PP system is built around a language and an inference engine
(which is typically based on variants of \e{Markov Chain Monte Carlo},
most notably \e{Metropolis-Hastings}).  An inference task is a
probability-aware aggregate operation over all the possible worlds,
such as finding the most likely possible world, or estimating the
probability of an event (which is phrased over the outcome of the
program). Recently, DARPA initiated the project of \e{Probabilistic
  Programming for Advancing Machine Learning}, aimed at advancing PP
systems (with a focus on a specific collection of systems,
e.g.,~\cite{pfeffer2009figaro,DBLP:journals/corr/MansinghkaSP14,BLOG:IJCAI:2005})
towards facilitating the development of algorithms based on machine
learning.

In probabilistic programming, a statistical model is typically phrased
by means of two components. The first component is a \e{generative
  process} that produces a random possible world by straightforwardly
following instructions with randomness, and in particular, sampling
from common numerical probability functions; this gives the \e{prior
  distribution}. The second component allows to phrase constraints
that the relevant possible worlds should satisfy, and, semantically,
transforms the prior distribution into the \e{posterior
  distribution}---the subspace conditional on the constraints.

As an example, in supervised text classification (e.g., spam
detection) the goal is to classify a text document into one of several
known classes (e.g., spam/non-spam). Training data consists of a
collection of documents labeled with classes, and the goal of learning
is to build a model for predicting the classes of unseen documents.
One common approach to this task assumes a generative process that
produces random \e{parameters} for every class, and then uses these
parameters to define a generator of random words in documents of the
corresponding
class~\cite{DBLP:dblp_journals/ml/NigamMTM00,DBLP:bibsonomy_mccallum-multilabel}.
So, the prior distribution generates parameters and documents for each
class, and the posterior is defined by the actual documents of the
training data. In \e{unsupervised} text classification the goal is to
cluster a given set of documents, so that different clusters
correspond to different topics (which are not known in
advance). Latent Dirichlet
Allocation~\cite{DBLP:journals/jmlr/BleiNJ03} approaches this problem
in a similar generative way as the above, with the addition that each
document is associated with a distribution over topics.

While the agenda of probabilistic programming is the deployment of
programming languages to developing statistical models, in this
framework paper we explore this agenda from the point of view of
database programming.  Specifically, we propose and investigate an
extension of Datalog for declarative specification of statistical
models on top of a database. We believe that Datalog can be naturally
extended to a language for building statistical models, since its
essence is the production of new facts from known (database) facts. Of
course, traditionally these facts are deterministic, and our extension
enables the production of probabilistic facts that, in particular,
involve numbers from available numerical distributions.  And by virtue
of extending Datalog, our framework offers a natural integration with
the database, and has a robust declarative semantics: a program is a
set of rules that is semantically invariant under transformations that
retain logical equivalence. Moreover, the semantics of a program
(i.e., the probability space it specifies) is fully determined by the
satisfaction of rules, and does not depend on the specifics of any
execution engine.

In par with languages for probabilistic programming, our proposed
extension consists of two parts: a \e{generative Datalog program} that
specifies a prior probability space over (finite or infinite) sets of
facts that we call \e{possible outcomes}, and a definition of the
posterior probability by means of \e{observations}, which come in the
form of an ordinary logical constraint over the extensional and
intensional relations.

The generative component of our Datalog extension provides convenient
mechanisms to include conventional parameterized numerical probability
functions (e.g., Poisson, geometrical, etc.). Syntactically, this
extension allows to sample values in the conclusion of rules,
according to specified parameterized distributions. As an example,
consider the relation
$\mathrm{Client}(\mathrm{ssn},\mathrm{branch},\mathrm{avgVisits})$
that represents clients of a service provider, along with their
associated branch and average number of visits (say, per month). The
following distributional rule models a random number of visits for
that client in the branch.
\begin{equation}\label{eq:visits}
\mathrm{Visits}(c,b,\distname{Poisson}[\lambda])\leftarrow 
\mathrm{Client}(c,b,\lambda)
\end{equation}
Note, however, that a declarative interpretation of the above rule is
not straightforward. Suppose that we have another rule of the
following form:
\begin{equation}\label{eq:pref-visits}
\mathrm{Visits}(c,b,\distname{Poisson}[\lambda])\leftarrow 
\mathrm{PreferredClient}(c,b,\lambda)
\end{equation}
Then, what would be the semantics if a person is both a client and a
preferred client? Do we sample twice for that person? And what if the
two $\lambda$s of the two facts are not the same? Is sampling
according to one rule considered a satisfaction of the other rule?
What if we have also the following rule:
\begin{equation}\label{eq:visits-active}
\mathrm{Visits}(c,b,\distname{Poisson}[\lambda])\leftarrow 
\mathrm{Client}(c,b,\lambda),\mathrm{Active}(b)
\end{equation}
From the viewpoint of Datalog syntax, Rule~\eqref{eq:visits-active} is
logically implied by Rule~\eqref{eq:visits}, since the premise of
Rule~\eqref{eq:visits-active} implies the premise of
Rule~\eqref{eq:visits}. Hence, we would like the addition of
Rule~\eqref{eq:visits-active} to have no effect on the program. This
means that some rule instantiations will not necessarily fire an
actual sampling.

To make sense of rules such as the above, we associate with every
program $\G$ an auxiliary program $\widehat{\G}$, such that
$\widehat{\G}$ does not use distributions, but is rather an ordinary
Datalog program where a rule can have an existentially quantified
variable in the conclusion. Intuitively, in our example such a rule
states that ``if the premise holds, then there exists a fact
$\mathrm{Visits}(c,b,x)$ where $x$ is associated with the distribution
$\distname{Poisson}$ and the parameter $\lambda$.''  In particular, if
the program contains the aforementioned Rule~\eqref{eq:visits}, then
Rule~\eqref{eq:visits-active} has no effect; similarly, if the tuple
$(c,b,\lambda)$ is in both $\mathrm{Client}$ and
$\mathrm{PreferredClient}$, then in the presence of
Rule~\eqref{eq:pref-visits} the outcome does not change if one of
these tuples is removed.

In this paper we focus on numerical probability distributions that are
discrete (e.g., the aforementioned ones). Our framework has a natural
extension to continuous distributions (e.g., Gaussian or Pareto), but
our analysis requires a nontrivial generalization that we defer to
future work.

When applying the program $\G$ to an input instance $I$, the
probability space is over all the minimal solutions of $I$
w.r.t.~$\widehat{\G}$, such that all the numerical samples have a
positive probability.  To define the probabilities of a sample in this
probability space, we consider two cases. In the case where all the
possible outcomes are finite, we get a discrete probability
distribution, and the probability of a possible outcome can be defined
immediately from its content.  But in general, a possible outcome can
be infinite, and moreover, the set of all possible outcomes can be
uncountable. Hence, in the general case we define a probability
measure space. To make the case for the coherence of our definitions
(i.e., our definitions yield proper probability spaces), we define a
natural notion of a \e{probabilistic chase} where existential
variables are produced by invoking the corresponding numerical
distributions. We use \e{cylinder sets}~\cite{Ash2000} to define a
measure space based on a chase, and prove that this definition is
robust, since one establishes the same probability measure no matter
which chase is used.

%%%%%%%%%%%%%%%%%%%%%%%%%%%%%%
\partitle{Related Work} Our contribution is a marriage between
probabilistic programming and the declarative specification of
Datalog.  The key features of our approach are the ability to express
probabilistic models \e{concisely} and \e{declaratively} in a Datalog
extension with probability distributions as first-class citizens.
Existing formalisms that associate a probabilistic interpretation with
logic are either not declarative (at least in the Datalog sense) or
depart from the probabilistic programming paradigm (e.g., by lacking
the support for numerical probability distributions). We next discuss
representative related formalisms and contrast them with our work.
They can be classified into three broad categories: \e{(1)} imperative
specifications over logical structures, \e{(2)} logic over
probabilistic databases, \e{and (3)} indirect specifications over the
Herbrand base. (Some of these formalisms belong to more than one
category.)

The first category includes imperative probabilistic programming
languages~\cite{PP2014}, such as BLOG~\cite{BLOG:IJCAI:2005}, that can
express probability distributions over logical structures, via
generative stochastic models that can draw values at random from
numerical distributions, and condition values of program variables on
observations. In contrast with closed-universe languages such as SQL
and logic programs, BLOG considers open-universe probability models
that allow for uncertainty about the existence and identity of
objects. Instantiations of this category also do not focus on a
declarative specification, and indeed, their semantics is dependent on
their particular imperative implementations.  P-log~\cite{Baral:2009}
is a Prolog-based language for specifying Bayesian networks. Although
declarative in nature, the semantics inherently assumes a form of
acyclicity that allows the rules to be executed serially. Here we are
able to avoid such an assumption since our approach is based on the
minimal solutions of an existential Datalog program.

The formalisms in the second category view the generative part of the
specification of a statistical model as a two-step process. In the
first step, facts are being randomly generated by a mechanism external
to the program. In the second step, a logic program, such as
Prolog~\cite{ProbLog11} or Datalog~\cite{AbiteboulDV14}, is evaluated
over the resulting random structure. This approach has been taken by
PRISM~\cite{DBLP:conf/ijcai/SatoK97}, the \e{Independent
  Choice Logic}~\cite{DBLP:conf/ilp/Poole08}, and to a large extent by
\e{probabilistic databases}~\cite{PDB11} and their semistructured
counterparts~\cite{DBLP:series/sfsc/KimelfeldS13}.  The focus of our
work, in contrast, is on a formalism that completely defines the
statistical model, without referring to external processes.

One step beyond the second category and closer to our work is taken by
uncertainty-aware query languages for probabilistic data such as
TriQL~\cite{WidomTrio2008}, I-SQL, and world-set
algebra~\cite{WSA07,PWSA07}. The latter two are natural analogs to SQL
and relational algebra for the case of incomplete information and
probabilistic data~\cite{WSA07}. They feature constructs such as
\verb=repair-key=, \verb=choice-of=, \verb=possible=, \verb=certain=,
and \verb=group-worlds-by= that can construct possible worlds
representing all repairs of a relation with respect to (w.r.t.) key
constraints, close the possible worlds by unioning or intersecting
them, or group the worlds into sets with the same results to
sub-queries. World-set algebra has been extended to (world-set)
Datalog, fixpoint, and while-languages~\cite{DeutchKM10} to define
Markov chains. While such languages cannot explicitly specify
probability distributions, they may simulate a specific categorical
distribution indirectly using non-trivial programs with specialized
language constructs like \verb=repair-key= on input tuples with
weights representing samples from the distribution.

\eat{Key properties of world-set algebra are conservativity over
  relational algebra~\cite{WSA07}, genericity (in contrast to TriQL),
  expressiveness~\cite{Koch09}, and efficient
  processing~\cite{PWSA07}.  World-set algebra captures exactly
  second-order logic over finite structures, or equivalently, the
  polynomial hierarchy~\cite{Koch09}.  Moreover, it is closed under
  composition~\cite{Koch09}: view definitions do not add to the
  expressive power of the language even though first materializing a
  view and subsequently using it multiple times in the query is
  semantically different from composing the query with the view and
  thus obtaining several copies of the view definition that can
  independently make their non-deterministic choices. This problem may
  occur for rule inlining optimizations, which preserve equilavence of
  standard Datalog programs, in an extension of our formalism, where
  rules may have $\Delta$-atoms in the body. For instance, assume a
  rule with a $\Delta$-atom in the head is used twice in the body of
  another rule. Then, the set of possible worlds of this program may
  not the same with that of the program where we'd inline the first
  rule twice in the second rule.}

MCDB~\cite{MCDB2008} and SimSQL~\cite{SimSQL2013} propose SQL
extensions (with for-loops and probability distributions) coupled with
Monte Carlo simulations and parallel database techniques for
stochastic analytics in the database. In contrast, our work focuses on
existential Datalog with recursion and probability spaces over the
minimal solutions of the data w.r.t.~the Datalog program.

Formalisms in the third category are indirect specifications of
probability spaces over the \e{Herbrand base}, which is the set of all
the facts that can be obtained using the predicate symbols and the
constants of the database. This category includes \e{Markov Logic
  Networks
  (MLNs)}~\cite{DBLP:series/synthesis/2009Domingos,DBLP:journals/pvldb/NiuRDS11},
where the logical rules are used as a compact and intuitive way of
defining \e{factors}.  In other words, the probability of a possible
world is the product of all the numbers (factors) that are associated
with the rules that the world satisfies.  This approach is applied in
DeepDive~\cite{DBLP:conf/vlds/NiuZRS12}, where a database is used for
storing relational data and extracted text, and database queries are
used for defining the factors of a factor graph.  We view this
approach as \e{indirect} since a rule does not determine directly the
distribution of values. Moreover, the semantics of rules is such that
the addition of a rule that is logically equivalent to (or implied by,
or indeed equal to) an existing rule changes the semantics and thus
the probability distribution. A similar approach is taken by
\e{Probabilistic Soft Logic}~\cite{DBLP:conf/uai/BrochelerMG10}, where
in each possible world every fact is associated with a weight (degree
of truth).

Further formalisms in this category are \e{probabilistic
  Datalog}~\cite{DBLP:journals/jasis/Fuhr00}, \e{probabilistic
  Datalog+/-}~\cite{PDatalogPlusMinus13}, and {\em probabilistic logic
  programming (ProbLog)}~\cite{ProbLog11}. In these formalisms, every
rule is associated with a probability. For ProbLog, the semantics is
not declarative as the rules follow a certain evaluation order; for
probabilistic Datalog, the semantics is purely declarative. Both
semantics are different from ours and that of the other formalisms
mentioned thus far. A Datalog rule is interpreted as a rule over a
probability distribution over possible worlds, and it states that, for
a given grounding of the rule, the marginal probability of being true
is as stated in the rule. Probabilistic Datalog+/- uses MLNs as the
underlying semantics.  Besides our support for numerical probability
distributions, our formalism is used for defining a single probability
space, which is in par with the standard practice in probabilistic
programming. 

As said earlier, the programs in our proposed formalism allow for
recursion. As we show in the paper, the semantics is captured by
Markov chains that may be infinite. Related formalisms are those of
the \e{Probabilistic Context-Free Grammar} (PCFG) and the more general
\e{Recursive Markov Chain}
(RMC)~\cite{DBLP:journals/jacm/EtessamiY09}, where the probabilistic
specification is by means of a finite set of transition graphs that
can call one another (in the sense of method call) in a possibly
recursive fashion. In database research, PCFGs and RMCs have been
explored in the context of probabilistic
XML~\cite{DBLP:conf/icdt/CohenK10,DBLP:journals/pvldb/BenediktKOS10}.
Although these formalisms do not involve numerical distributions, in
future work we plan to conduct a study of the relative expressive
power between them and restrictions of our framework. Moreover, we
plan to study whether and how inference techniques on PCFGs and RMCs
can be adapted to our framework.

\partitle{Organization}
The remainder of the paper is organized as follows. In
Section~\ref{sec:preliminaries} we give basic definitions.  The syntax
and semantics of \e{generative Datalog} is introduced in
Section~\ref{sec:generative}, where we focus on the case where all
solutions are finite.  In Section~\ref{sec:chase} we present our
adaptation of the chase. The general case of generative Datalog, where
solutions can be infinite, is presented in Section~\ref{sec:measures}.
We complete our development in Section~\ref{sec:ppdl}, where
generative Datalog is extended with constraints (observations) to form
\e{Probabilistic-Programming Datalog} (\e{PPDL}). Finally, we discuss
extensions and future directions in Section~\ref{sec:extensions} and
conclude in Section~\ref{sec:conclusions}.

%%%%%%%%%%%%%%%%%%%%%%%%%%%%%%%%%%
%%%%%%%%%%%%%%%%%%%%%%%%%%%%%%%%%%
%\input{sections/preliminaries.tex}
\section{Preliminaries}\label{sec:preliminaries}

In this section we give some preliminary definitions that we will use
throughout the paper.

%\subsection{Schemas and Instances}
\partitle{Schemas and instances}
A (\e{relational}) \e{schema} is a collection $\scs$ of \e{relation
  symbols}, where each relation symbol $R$ is associated with an
\e{arity}, denoted $\arity(R)$, which is a natural number.  An
\e{attribute} of a relation symbol $R$ is any number in
$\set{1,\dots,\arity(R)}$.  For simplicity, we consider here only
databases over real numbers; our examples may involve strings, which
we assume are translatable into real numbers. A
\e{fact} over a schema $\scs$ is an expression of the form $R(c_1,
\ldots, c_n)$ where $R$ is an $n$-ary relation in $\scs$ and $c_1,
\ldots, c_n\in \allreals$. An \e{instance} $I$ over $\scs$ is a finite
set of facts over $\scs$. We will denote by $R^I$ the set of all
tuples $(c_1, \ldots, c_n)$ such that $R(c_1, \ldots, c_n)\in I$ is a
fact of $I$.

%\subsection{Datalog Programs}
\partitle{Datalog programs}
In this work we use Datalog with the option of having existential
variables in the head~\cite{DatalogPlusMinus10}. Formally, an
\e{existential Datalog program}, or just \e{\EDatalog program} for
short, is a triple $\D=(\E,\I,\Theta)$ where:
%\begin{itemize}
%\begin{compactitem}
%\item 
(1) $\E$ is a schema, called the \e{extensional database} (EDB)
schema,
%\item 
(2) $\I$ is a schema, called the \e{intensional database} (IDB)
schema, and is disjoint from $\E$, 
%\item 
and (3)
$\Theta$ is a finite set of \emph{\EDatalog rules}, i.e.,,
first-order formulas of the form
\[\forall\tup x \big[~\exists\tup
y(\psi(\tup x,\tup y)) \leftarrow \varphi(\tup x)~\big]\] where
$\varphi(\tup x)$ is a conjunction of atomic formulas over $\E\cup\I$ and
$\psi(\tup x,\tup y)$ is an atomic formula over $\I$, such that
each variable in $\tup x$ occurs in at least one atomic formula of $\varphi$. 
%\end{itemize}
%\end{compactitem}
Here, by an \emph{atomic formula} (or, \emph{atom}) we mean an
expression of the form $R(t_1, \ldots, t_n)$ where $R$ is an $n$-ary
relation and $t_1, \ldots, t_n$ are either constants (i.e., real
numbers) or variables.  We usually omit the universal quantifiers for
readability's sake.  \e{Datalog} is the fragment of \EDatalog where the
conclusion (left-hand side) of each rule is a single atomic formula
without existential quantifiers.

Let $\D=(\E,\I,\Theta)$ be a \EDatalog program. An \e{input instance}
for $\D$ is an instance $I$ over $\E$.  A \e{solution} of $I$
w.r.t.~$\D$ is a possibly-infinite set $F$ of facts over $\E\cup\I$,
such that $I\subseteq F$ and $F$ satisfies all rules in $\Theta$
(viewed as first-order sentences).  A \e{minimal solution} of $I$
(w.r.t.~$\D$) is a solution $F$ of $I$ such that no proper subset of
$F$ is a solution of $I$.  The set of all, finite and infinite,
minimal solutions of $I$ w.r.t.~$\D$ is denoted by $\inflfp_{\D}(I)$,
and the set of all \e{finite} minimal solutions is denoted by
$\flfp_{\D}(I)$.  It is a well known fact that, if $\D$ is a
\emph{Datalog program} (that is, without existential quantifiers),
then every input instance $I$ has a unique minimal solution, which is
finite, and therefore $\flfp_{\D}(I)=\inflfp_{\D}(I)$.

%\subsection{Probability Spaces}
\partitle{Probability spaces}
We separately consider \e{discrete} and \e{continuous} probability
spaces.  We initially focus on the discrete case; there, a
\e{probability space} is a pair $(\Omega,\pi)$, where $\Omega$ is a
finite or countably infinite set, called the \e{sample space}, and
$\pi:\Omega\rightarrow[0,1]$ is such that
$\sum_{o\in\Omega}\pi(o)=1$. If $(\Omega,\pi)$ is a probability space,
then $\pi$ is a \e{probability distribution} over $\Omega$. We say
that $\pi$ is a \e{numerical} probability distribution if
$\Omega\subseteq\allreals$. In this work we focus on discrete
numerical distributions.

A \e{parameterized} probability distribution is a function
$\delta:\Omega\times\allreals^k\rightarrow[0,1]$, such that
$\delta(\cdot,\tup p):\Omega\rightarrow[0,1]$ is a probability
distribution for all $\tup p\in\allreals^k$.  We use $\pardim(\delta)$
to denote the number $k$, called the \e{parameter dimensionality} of
$\delta$. For presentation's sake, we may write $\delta(o|\tup p)$
instead of $\delta(o,\tup p)$. Moreover, we denote the
(non-parameterized) distribution $\delta(\cdot|\tup p)$ by
$\delta[\tup p]$.  Examples of (discrete) parameterized distributions
follow.

\begin{citemize}
\item $\distname{Flip}(x|p)$: $\Omega$ is $\set{0,1}$, and for a
  parameter $p\in[0,1]$ we have $\distname{Flip}(1|p)=p$ and
  $\distname{Flip}(0|p)=1-p$.
\item $\distname{Poisson}(x|\lambda)$: $\Omega=\allnaturals$, and for a
  parameter $\lambda\in(0,\infty)$ we have
  $\distname{Poisson}(x|\lambda)= \lambda^xe^{-\lambda}/x!$.
\item $\distname{Geo}(x|p)$: $\Omega=\allnaturals$, and for a
  parameter $p\in[0,1]$ we have $\distname{Geo}(x|p)=(1-p)^xp$.
%\item $\distname{Binom}(x|n,p)$: $\Omega=\set{0,1,\dots,n}$, and for
%  parameters $p\in[0,1]$ and $n\in\allnnreals$ we have
%  $\distname{Binom}(x|n,p)= \binom{n}{x} p^x(1-p)^{n-x}$.
\end{citemize}

In Section~\ref{sec:extensions} we will discuss the extension of our
framework to models that have an unbounded number of parameters, and
to continuous distributions.

%%%%%%%%%%%%%%%%%%%%%%%%%%%%%%%%%%
%%%%%%%%%%%%%%%%%%%%%%%%%%%%%%%%%%
%\input{sections/generative.tex}
\section{Generative Datalog}\label{sec:generative}
A Datalog program without existential quantifiers specifies how to
obtain a \e{solution} from an input EDB instance by producing the set
of inferred IDB facts.  In this section we present \e{generative
  Datalog programs}, which specify how to infer a \e{distribution}
over possible outcomes given an input EDB instance.

\subsection{Syntax}
We first define the syntax of a generative Datalog program, which we
call a \e{\DDatalog program}.
\begin{definition}[\DDatalog]
Let $\Delta$ be a finite set of parametrized numerical distributions. 
\begin{enumerate}
\item A \e{$\Delta$-term} is a term of the form $\delta[p_1, \ldots,
  p_k]$ where $\delta\in\Delta$ is a parametrized distribution with
  $\pardim(\delta)=k$, and $p_1, \ldots, p_k$ are variables and/or
  constants.
\item A \e{$\Delta$-atom} in a schema $\scs$ is an atomic formula
  $R(t_1, \ldots, t_n)$ with $R\in\scs$ an $n$-ary relation, such that
  exactly one term $t_i$ ($1\leq i\leq n$) is a $\Delta$-term, and all
  other $t_j$ are constants and/or variables.
\item A \e{\DDatalog rule} over a pair of disjoint schemas $\E$ and
  $\I$ is a first-order sentence of the form $\forall{\tup
    x}(\psi({\tup x})\leftarrow \phi({\tup x}))$ where $\phi({\tup
    x})$ is a conjunction of atoms in $\E\cup\I$ and $\psi({\tup x})$
  is either an atom in $\I$ or a $\Delta$-atom in $\I$.
\item A \e{\DDatalog program} is a triple $\G=(\E,\I,\Theta)$, where
  $\E$ and $\I$ are disjoint schemas and $\Theta$ is a finite set of
  \DDatalog rules over $\E$ and $\I$.
\end{enumerate}
\end{definition}

\begin{example}\label{example:generative}
  Our example is based on the burglar example of
  Pearl~\cite{DBLP:books/daglib/0066829} that has been frequently used
  for illustrating probabilistic programming
  (e.g.,~\cite{DBLP:conf/aaai/NoriHRS14}).  Consider the EDB schema
  $\E$ consisting of the following relations: $\rel{House}(h,c)$
  represents houses $h$ and their location cities $c$,
  $\rel{Business}(b,c)$ represents businesses $b$ and their location
  cities $c$, $\rel{City}(c,r)$ represents cities $c$ and their
  associated burglary rates $r$, and $\rel{AlarmOn}(x)$ represents
  units (houses or businesses) $x$ where the alarm is on.
  Figure~\ref{fig:I} shows an instance $I$ over this schema.  Now
  consider the \DDatalog program $\G=(\E,\I,\Theta)$ of
  Figure~\ref{fig:ddatalog}.
\begin{figure}
\fbox{\parbox{3.3in}{
\begin{enumerate}
  \item $\rel{Earthquake}(c,\distname{Flip}[0.01]) \dla \rel{City}(c,r)$
  \item $\rel{Unit}(h,c) \dla \rel{Home}(h,c)$
  \item $\rel{Unit}(b,c) \dla \rel{Business}(b,c)$
  \item $\rel{Burglary}(x,c,\distname{Flip}[r]) \dla \rel{Unit}(x,c)\dcom \rel{City}(c,r)$  
  \item $\rel{Trig}(x,\distname{Flip}[0.6]) \dla \rel{Unit}(x,c)\dcom\rel{Earthquake}(c,1)$
  \item $\rel{Trig}(x,\distname{Flip}[0.9]) \dla \rel{Burglary}(x,c,1)$
  \item $\rel{Alarm}(x) \dla \rel{Trig}(x,1)$
\end{enumerate}}}
\caption{\label{fig:ddatalog}An example \DDatalog program $\G$}
\end{figure}
Here, $\Delta$ consists of only one distribution, namely
$\distname{Flip}$.  The first rule above, intuitively, states that,
for every fact of the form $\rel{City}(c,r)$, there must be a fact
$\rel{Earthquake}(c,y)$ where $y$ is drawn from the Flip (Bernoulli)
distribution with the parameter $0.01$.
\end{example}

\subsection{Possible Outcomes}
To define the \e{possible outcomes} of a \DDatalog program, we associate to
each \DDatalog program $\G=(\E,\I,\Theta)$ a corresponding \EDatalog
program $\widehat{\G}=(\E,\I^\Delta,\Theta^\Delta)$.  The
possible outcomes of an input instance $I$ w.r.t.~$\G$ will then be minimal
solutions of $I$ w.r.t.~$\widehat{\G}$. Next, we describe 
$\I^\Delta$ and $\Theta^\Delta$.

The schema $\I^\Delta$ extends $\I$ with the following additional
relation symbols: whenever a rule in $\Theta$ contains a $\Delta$-atom
of the form $R(\ldots, \delta[\ldots], \ldots)$, and $i\leq\arity(R)$
is the argument position at which the $\delta$-term in question
occurs, then we add to $\I^\Delta$ a corresponding relation symbol
$R^\delta_i$, whose arity is $\arity(R)+\pardim(\delta)$.  These
relation symbols $R^\delta_i$ are called the \e{distributional}
relation symbols of $\I^\Delta$, and the other relation symbols of
$\I^\Delta$ (namely, those of $\I$) are referred to as the
\e{ordinary} relation symbols.  Intuitively, a fact in $R^\delta_i$
asserts the existence of a tuple in $R$ and a sequence of parameters,
such that the $i$th element of the tuple is sampled from $\delta$
using the parameters.

The set $\Theta^\Delta$ contains three kinds of rules:
\begin{enumerate}
\item[(i)] All Datalog rules from $\Theta$ that contain no
  $\Delta$-terms;
\item[(ii)] The rule $\exists y R^\delta_i({\tup t}, y, {\tup t'},
  {\tup p})\leftarrow \phi({\tup x})$ for every rule of the form
  $R({\tup t}, \delta[\tup p], {\tup t'})\leftarrow \phi({\tup x})$
  in $\Theta$, where $i$ is the position of $\delta[\tup p]$ 
  in $R({\tup t}, \delta[\tup p], {\tup t'})$;  
\item[(iii)] The rule $\forall{\tup x},{\tup p}
  (R(\tup{x})\leftarrow R^\delta_i({\tup x},{\tup p}))$ for every
  distributional relation symbol $R^\delta_i\in \I^\Delta$.
\end{enumerate}
Note that in (ii), $\tup t$ and $\tup t'$ are the terms that occur
before and after the $\Delta$-term $\delta[\tup p]$, respectively. A
rule in (iii) states that every fact in $R^\delta_i$ should be
reflected in the relation $R$.

\begin{example} 
  The \DDatalog program $\G$ given in Example~\ref{example:generative}
  gives rise to the corresponding $\EDatalog$ program $\widehat{\G}$
  of Figure~\ref{fig:edatalog}.  As an example of (ii), rule~6 of
  Figure~\ref{fig:ddatalog} is replaced with rule~6 of
  Figure~\ref{fig:edatalog}. Rules~8--10 of Figure~\ref{fig:edatalog}
  are examples of (iii).  
  % Note that $\widehat{\G}$ contains several
  % more rules of type (iii) for relation symbols $R^\delta_i$ where
  % $\delta$ never occurs in the $i$th position of $R$; those do not
  % affect the program, and are omitted from the example.
\end{example}

\begin{figure}
\scalebox{0.93}{\fbox{\parbox{3.5in}{
  \begin{enumerate}
  \item $\exists y~\rel{Earthquake}^\distname{Flip}_2(c,y,0.01) \dla \rel{City}(c,r)$
  \item $\rel{Unit}(h,c) \dla \rel{Home}(h,c)$
  \item $\rel{Unit}(b,c) \dla \rel{Business}(b,c)$
  \item $\exists y~\rel{Burglary}^\distname{Flip}_3(x,c,y,r) \dla \rel{Unit}(x,c)\dcom \rel{City}(c,r)$  
  \item \mbox{$\exists y~\rel{Trig}^\distname{Flip}_2(x,y,0.6) \dla\rel{Unit}(x,c)\dcom\rel{Earthquake}(c,1)$}
  \item $\exists y~\rel{Trig}^\distname{Flip}_2(x,y,0.9) \dla \rel{Burglary}(x,c,1)$
  \item $\rel{Alarm}(x) \dla \rel{Trig}(x,1)$
  \item $\rel{Earthquake}(c,d) \leftarrow \rel{Earthquake}^\distname{Flip}_2(c,d,p)$
  \item $\rel{Burglary}(x,y) \leftarrow \rel{Burglary}^\distname{Flip}_2(x,y,p)$
  \item $\rel{Trig}(x,y) \leftarrow \rel{Trig}^\distname{Flip}_2(x,y,p)$
\end{enumerate}}}}
\caption{\label{fig:edatalog}The \EDatalog program $\widehat{\G}$ for
  the \DDatalog program $\G$ of Figure~\ref{fig:ddatalog}}
\end{figure}
 
A \e{possible outcome} is defined as follows.
\begin{definition}[Possible Outcome]
  Let $I$ be an input instance for a \DDatalog program $\G$.  A
  \e{possible outcome} for $I$ w.r.t.~$\G$ is a  minimal solution $F$
  of $I$ w.r.t.~$\widehat{\G}$, such that $\delta(b|\tup p)> 0$ for
  every distributional fact $R^\delta_i({\tup a}, b, {\tup c}, {\tup
    p})\in F$ with $b$ in the $i$th position.
\end{definition}
We denote the set of all possible outcomes of $I$ w.r.t.~$\G$ by
$\sol_\G(I)$, and we denote the set of all finite possible outcomes by
$\fsol_\G(I)$.

The following proposition provides an insight into the possible
outcomes of an instance, and will reappear later on in our study of
the chase. For any distributional relation $R^\delta_i\in
\Theta^\Delta$, the \emph{functional dependency associated to
  $R^\delta_i$} is the functional dependency $R^\delta_i: (\{1,
\ldots, \arity(R^\delta_i)\}\setminus \{i\}) \to i$, expressing that
the $i$-th attribute is functionally determined by the rest.

\begin{proposition}\label{prop:sol:fd} 
  Let $I$ be any input instance for a \DDatalog instance $\G$. Then
  every possible outcome in $\sol_\G(I)$ satisfies all functional dependencies
  associated to distributional relations.
\end{proposition}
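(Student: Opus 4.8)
The plan is to argue by contradiction from the \emph{minimality} of possible outcomes; note that the positivity condition $\delta(b\mid\tup p)>0$ plays no role here and only minimality is used. Suppose some $F\in\sol_\G(I)$ violates the functional dependency associated to a distributional relation $R^\delta_i$. Then $F$ contains two distinct facts that agree on every attribute except the $i$-th, say
\[
f_1=R^\delta_i(\tup a,b,\tup c,\tup p)\quad\text{and}\quad f_2=R^\delta_i(\tup a,b',\tup c,\tup p)\quad\text{with } b\neq b',
\]
where $\tup a$ and $\tup c$ list the entries before and after position $i$ and $\tup p$ is the parameter tuple. I would prove that $F'\eqdef F\setminus\{f_2\}$ is still a solution of $I$ w.r.t.~$\widehat{\G}$; since $F'\subsetneq F$, this contradicts the minimality of $F$ and establishes the proposition.

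First, $I\subseteq F'$ holds trivially: $f_2$ is a fact over the distributional relation $R^\delta_i\in\I^\Delta$, which is an IDB relation, whereas $I$ consists only of EDB facts, so $f_2\notin I$. It then remains to verify that $F'$ satisfies every rule of $\Theta^\Delta$, which I would do by cases on the three rule types. Rules of type~(i) are ordinary Datalog rules over $\E\cup\I$ that mention no distributional relation at all, so removing the distributional fact $f_2$ changes neither their bodies nor their heads, and they remain satisfied. Rules of type~(iii), of the form $R(\tup x)\leftarrow R^\delta_i(\tup x,\tup p)$, contain $R^\delta_i$ only in the body; deleting $f_2$ can merely falsify an instantiation of such a body, thereby discharging an obligation rather than creating one, so these rules too remain satisfied. (Here it is important that we delete only $f_2$ itself and not the ordinary fact $R(\tup a,b',\tup c)$, which may harmlessly stay in $F'$.)

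The crux is the type-(ii) rules $\exists y\,R^\delta_i(\tup t,y,\tup t',\tup p)\leftarrow\phi(\tup x)$, where the existentially quantified $y$ occupies position $i$. This is the only place where deleting $f_2$ could damage a rule, since $f_2$ might have been the witness for the head's existential under some satisfying assignment $\nu$ of $\phi$. The key observation is that such an assignment fixes all of the non-$i$ positions of the required tuple, namely to $\nu(\tup t),\nu(\tup t'),\nu(\tup p)$, and leaves only position $i$ free. Since $f_1$ agrees with $f_2$ on exactly those non-$i$ positions, the surviving fact $f_1$ is an equally good witness: whenever $f_2$ witnessed the existential for $\nu$, so does $f_1$. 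Hence every type-(ii) rule still holds in $F'$, so $F'$ is a solution strictly smaller than $F$, the desired contradiction. The single delicate point is precisely this witness-replacement step, which works because the attribute constrained by the functional dependency is exactly the existentially quantified position of the distributional relation.
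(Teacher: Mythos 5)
Your proof is correct and takes essentially the same approach as the paper, whose own argument is a one-sentence sketch: one of the two facts involved in the violation can be removed, contradicting the minimality of the solution w.r.t.~$\widehat{\G}$. Your case analysis over the three rule types of $\Theta^\Delta$ --- in particular the observation that $f_1$ can replace $f_2$ as the existential witness in type-(ii) rules because the functionally determined attribute is exactly the existentially quantified position --- just fills in the details the paper leaves implicit.
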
 

The proof of Proposition~\ref{prop:sol:fd} is easy: if an instance $J$
violates the funtional dependency associated to a distributional
relation $R^\delta_i$, then one of the two facts involved in the
violation can be removed, showing that $J$ is, in fact, not a minimal solution
 w.r.t.~$\widehat{\G}$.

{
\begin{figure}[t]
\centering
\begin{tabular}[t]{cc}
  \multicolumn{2}{c}{$\rel{House}$}\\\hline
  $\att{id}$ & $\att{city}$ \\\hline
  \val{NP1} & \val{Napa}\\
  \val{NP2} & \val{Napa} \\
  \val{YC1} & \val{Yucaipa} \\
  \hline
\end{tabular}
\quad\,\,
\begin{tabular}[t]{cc}
  \multicolumn{2}{c}{$\rel{Business}$}\\\hline
  $\att{id}$ & $\att{city}$ \\\hline
  \val{NP3} & \val{Napa}\\
  \val{YC1} & \val{Yucaipa} \\
  \hline
\end{tabular}
\vskip1em
\begin{tabular}[t]{cc}
  \multicolumn{2}{c}{$\rel{City}$}\\\hline
  $\att{name}$ & $\att{burglary rate}$ \\\hline
  \val{Napa} & \val{0.03}\\
  \val{Yucaipa} & \val{0.01} \\
  \hline
\end{tabular}
\quad\,\,
\begin{tabular}[t]{c}
  \multicolumn{1}{c}{$\rel{AlarmOn}$}\\\hline
  $\att{unit}$ \\\hline
  \val{NP1} \\
  \val{YC1} \\
  \val{YC2} \\
  \hline
\end{tabular}
\caption{\label{fig:I}EDB instance of the running example}
\end{figure}
}

\subsection{Finiteness and Weak Acyclicity}
Our presentation first focuses on the case where all the possible
outcomes for the \DDatalog program are finite.  Before we proceed to
defining the semantics of such a \DDatalog program, we present the
notion of \e{weak acyclicity} for a \DDatalog program, as a natural
syntactic property that guarantees finiteness of all possible
outcomes. This draws on the notion of weak acyclicity for
\EDatalog~\cite{DBLP:conf/icdt/FaginKMP03}.  Consider any \DDatalog
program $\G=(\E,\I,\Theta)$.  A \e{position} of $\I$ is a pair $(R,i)$
where $R\in\I$ and $i$ is an attribute of $R$. The \e{dependency
  graph} of $\G$ is the directed graph that has the attributes of $\I$
as the nodes, and the following edges:
\begin{itemize}
\item A \e{normal edge} $(R,i)\rightarrow (S,j)$ whenever there is a
  rule $\psi(\tup x)\leftarrow\varphi(\tup x)$ and a variable $x$ at
  position $(R,i)$ in $\varphi(\tup x)$, and at position $(S,j)$ in
  $\psi(\tup x)$.
\item A \e{special edge} $(R,i)\rightarrow^* (S,j)$ whenever there is
  a rule of the form $$S(t_1, \ldots, t_{j-1},\delta[\tup p], t_{j+1},
  \ldots, t_n)\leftarrow\varphi(\tup x)$$ and an exported variable at
  position $(R,i)$ in $\varphi(\tup x)$. By an \e{exported variable},
  we mean a variable that appears in both the premise and the
  conclusion.
 \end{itemize}

We say that $\G$ is \e{weakly acyclic} if no cycle in the
dependency graph of $\G$ contains a special edge.

\def\thmweaklyacyclic{If a $\DDatalog$ program $\G$ is weakly acyclic,
  then $\sol_\G(I) = \fsol_\G(I)$ for all input instances $I$.  }

\begin{theorem}\label{thm:weakly-acyclic}
\thmweaklyacyclic
\end{theorem}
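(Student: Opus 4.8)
The plan is to reduce the statement to a termination property of the associated \EDatalog program $\widehat{\G}$ and then adapt the classical weak-acyclicity argument of Fagin et al.~\cite{DBLP:conf/icdt/FaginKMP03}. Since $\fsol_\G(I)\subseteq\sol_\G(I)$ always holds, it suffices to prove the reverse inclusion, i.e., that every possible outcome is finite. By definition a possible outcome is a minimal solution of $I$ w.r.t.~$\widehat{\G}$, so it is enough to show that \emph{every} minimal solution of $I$ w.r.t.~$\widehat{\G}$ is finite. The dependency graph of $\G$ was set up precisely so that its normal and special edges mirror value propagation and fresh-value creation in $\widehat{\G}$: a special edge marks a position of $\I^\Delta$ at which a distributional rule of type (ii) introduces a sampled value governed by the existential quantifier.

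First I would use weak acyclicity to define a rank function on the positions of $\I$ (and the induced positions of $\I^\Delta$): the rank of a position is the maximum number of special edges occurring on any path of the dependency graph ending at that position. Because no cycle of the dependency graph contains a special edge, every such maximum is finite and bounded by the number of positions; let $r$ be the largest rank. The heart of the argument is then an induction on the rank that bounds, \emph{uniformly over all minimal solutions $F$}, the number of distinct real values that can occur at each position. For rank-$0$ positions, no special edge feeds into the position, so every value there is propagated through normal edges from the input $I$; hence the set of such values is contained in the active domain of $I$ and its size is at most $|\mathrm{adom}(I)|$.

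For the inductive step, consider a position of rank $k>0$. Values reaching it either arrive along normal edges from positions of rank at most $k$, or are freshly produced by a distributional rule whose head occupies this position, along a special edge from premise positions of rank at most $k-1$. The key point is that, by Proposition~\ref{prop:sol:fd}, every minimal solution satisfies the functional dependency associated to each distributional relation $R^\delta_i$: the sampled value is functionally determined by the remaining arguments. Consequently each distributional rule contributes at most one value per distinct tuple of trigger parameters, and the number of such parameter tuples is bounded by a polynomial in the number of distinct values occurring at the strictly lower-rank premise positions feeding the special edge. By the induction hypothesis these counts are bounded by a quantity depending only on $I$ and $\G$, so the number of distinct values at the rank-$k$ position is likewise bounded. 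Summing over the finitely many positions shows that the active domain of any minimal solution $F$ is finite, with a bound independent of $F$; since the schema $\E\cup\I^\Delta$ is fixed, $F$ contains only finitely many facts. Therefore every minimal solution of $I$ w.r.t.~$\widehat{\G}$ is finite, giving $\sol_\G(I)=\fsol_\G(I)$.

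The main obstacle is the inductive counting step, and specifically two points that distinguish our setting from the standard chase of~\cite{DBLP:conf/icdt/FaginKMP03}. Unlike labelled nulls, the sampled values are real numbers that may coincide with input constants or with one another; such coincidences can only decrease the number of distinct values at a position, so they are harmless for the upper bound, but one must phrase the induction in terms of counts of distinct values rather than of freshly invented symbols. Second, whereas the usual chase generates at most one null per trigger by construction, here it is \emph{minimality} --- as captured by Proposition~\ref{prop:sol:fd} --- that guarantees at most one sampled value per parameter tuple, so the finite-branching property that makes the rank induction go through must be imported from minimality rather than taken for granted. Once these points are handled, the bound propagates exactly as in the weakly acyclic case for existential Datalog.
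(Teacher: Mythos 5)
Your route genuinely differs from the paper's. The paper's proof is short and citation-based: it notes that weak acyclicity of $\G$ implies classical weak acyclicity of $\widehat{\G}$ in the sense of \cite{DBLP:conf/icdt/FaginKMP03}, and then invokes a theorem of \cite{DBLP:journals/tods/FuxmanKMT06} stating that for a weakly acyclic \EDatalog program every solution \emph{contains} a sub-solution of size at most $p(|I|)$; minimality is used exactly once at the end (a minimal solution has no proper sub-solution, so it is itself that small). You instead try to re-derive the size bound by running the rank induction of Fagin et al.\ directly on an arbitrary minimal solution, with Proposition~\ref{prop:sol:fd} playing the role that ``one null per trigger'' plays in the chase. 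Spotting that the functional dependencies of minimal solutions replace the chase's by-construction finite branching is a correct and genuinely necessary observation. However, as written your induction has a real gap.

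The gap is that both your base case (``every value at a rank-$0$ position is propagated through normal edges from the input, hence lies in $\mathrm{adom}(I)$'') and your inductive step tacitly assume that every value occurring in a minimal solution traces back to $I$ along a \emph{well-founded} chain of rule firings. This is automatic for chase-generated instances, where facts are created in a well-founded order, but it is not automatic for an arbitrary minimal solution, and it does not follow from Proposition~\ref{prop:sol:fd}. What minimality gives you directly is only one-step support (every non-input fact witnesses the head of some rule whose body is satisfied by other facts), and one-step support is compatible with infinite, circularly supported junk: with the rules $S(x)\leftarrow R(x)$ and $R(x)\leftarrow S(x)$, the instance $I\cup\{S(v),R(v)\mid v\in V\}$ for an infinite set $V$ of reals is a solution, satisfies all the functional dependencies vacuously, and has every fact one-step supported, yet its rank-$0$ positions hold infinitely many values outside $\mathrm{adom}(I)$. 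It is excluded only because it is not minimal---a use of minimality your argument never makes. The same issue resurfaces in your inductive step, where values arriving ``along normal edges from positions of rank at most $k$'' are not covered by an induction hypothesis that only bounds ranks $\le k-1$: one needs that same-rank copying creates no new values, which again requires well-founded tracing. The missing lemma that repairs the argument is roughly: for a minimal solution $F$, the subset $F'\subseteq F$ built as the least fixpoint of ``add a fact of $F$ witnessing the head of a rule whose body is already satisfied'' is itself a solution, so $F'=F$ by minimality, and hence every fact of $F$ has a well-founded derivation. With that lemma (and with program constants added alongside $\mathrm{adom}(I)$), your counting argument goes through; without it, the induction never gets started. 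This lemma is precisely the work the paper sidesteps by quoting the sub-solution theorem, whose statement applies to \emph{all} solutions rather than to chase results.
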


{
\begin{figure}[t]
\centering
\begin{tabular}[t]{cc|c||c}
  \multicolumn{4}{c}{$\rel{Earthquake}^{\distname{Flip}}_2$}\\\hline
  $\att{city}$ & $\att{draw}$ & param & $w(f)$\\\hline
  \val{Napa}   & 1             & $0.01$ & $0.01$ \\
  \val{Yucaipa}& 0             & $0.01$ & $0.99$ \\
  \hline
\end{tabular}
\quad
\begin{tabular}[t]{cc}
  \multicolumn{2}{c}{$\rel{Earthquake}$}\\\hline
  $\att{city}$ & $\att{draw}$ \\\hline
  \val{Napa}   & 1             \\ 
  \val{Yucaipa}& 0             \\
  \hline
\end{tabular}
\vskip1em
\begin{tabular}[t]{ccc|c||c}
  \multicolumn{5}{c}{$\rel{Burglary}^{\distname{Flip}}_3$}\\\hline
  $\att{unit}$ & $\att{city}$ & $\att{draw}$ & param  & $w(f)$\\\hline
  \val{NP1}     & \val{Napa}   & 1             & $0.03$ & $0.03$ \\
  \val{NP2}     & \val{Napa}   & 0             & $0.03$ & $0.97$ \\
  \val{NP3}     & \val{Napa}   & 1             & $0.03$ & $0.03$ \\
  \val{YU1}     &\val{Yucaipa} & 0             & $0.01$ & $0.99$ \\
  \hline
\end{tabular}
\quad
\begin{tabular}[t]{c}
  \multicolumn{1}{c}{$\rel{Alarm}$}\\\hline
  $\att{unit}$ \\\hline
  \val{NP1}     \\  
  \val{NP2}     \\  
  \hline
\end{tabular}
\vskip1em
\begin{tabular}[t]{ccc}
  \multicolumn{3}{c}{$\rel{Burglary}$}\\\hline
  $\att{unit}$ & $\att{city}$ & $\att{draw}$ \\\hline
  \val{NP1}     & \val{Napa}   & 1            \\
  \val{NP2}     & \val{Napa}   & 0            \\
  \val{NP3}     & \val{Napa}   & 1            \\
  \val{YU1}     &\val{Yucaipa} & 0            \\
  \hline
\end{tabular}
\quad
\begin{tabular}[t]{cc}
  \multicolumn{2}{c}{$\rel{Unit}$}\\\hline
  $\att{id}$ & $\att{city}$ \\\hline
  \val{NP1}   & \val{Napa}  \\  
  \val{NP2}   & \val{Napa}  \\  
  \val{NP3}   & \val{Napa}  \\      
  \val{YU1}   & \val{Yucaipa}\\
  \hline
\end{tabular}
\vskip1em
\begin{tabular}[t]{cc|c||c}
  \multicolumn{4}{c}{$\rel{Trig}^{\distname{Flip}}_2$}\\\hline
  $\att{unit}$ & $\att{draw}$ & param  & $w(f)$\\\hline
  \val{NP1}     & 1             & $0.9$  & $0.9$ \\
  \val{NP3}     & 0             & $0.9$  & $0.1$ \\
  \val{NP1}     & 1             & $0.6$  & $0.6$ \\
  \val{NP2}     & 1             & $0.6$  & $0.6$ \\
  \val{NP3}     & 0             & $0.6$  & $0.4$ \\
  \hline
\end{tabular}
\quad
\begin{tabular}[t]{cc}
  \multicolumn{2}{c}{$\rel{Trig}$}\\\hline
  $\att{unit}$ & $\att{draw}$ \\\hline
  \val{NP1}     & 1            \\
  \val{NP3}     & 0            \\
  \val{NP1}     & 1            \\
  \val{NP2}     & 1           \\
  \val{NP3}     & 0           \\
  \hline
\end{tabular}

\caption{\label{fig:idb}A possible outcome for the input instance $I$ in the running
  example}
\end{figure}
}

\subsection{Probabilistic Semantics}

Intuitively, the semantics of a $\DDatalog$ program is a function that
maps every input instance $I$ to a probability distribution over
$\sol_\G(I)$. We now make this precise.  Let $\G$ be a \DDatalog
program, let $I$ be an input for $\G$.  Again, we first consider the
case where an input instance $I$ only has finite possible outcomes (i.e.,
$\sol_{\G}(I)=\fsol_{\G}(I)$).  Observe that, when all possible outcomes of
$I$ are finite, the set $\sol_\G(I)$ is countable, since we assume
that all of our numerical distributions are discrete.  In this case,
we can define a discrete probability distribution over the possible outcomes
of $I$ w.r.t.~$\G$. We denote this probability distribution
by $\Pr_{\G,I}$.

For a distributional fact $f=R^\delta_i(a_1, \ldots, a_n, \tup p)$,
we define the \e{weight} of $f$ (notation: $\weight(f)$) to be $\delta(a_i|\tup p)$. 
For an ordinary (non-distributional) fact $f$, we set $\weight(f)=1$.
 For a finite set $F$ of facts, we denote by
$\prw(F)$ the product of the weights of all the facts in $F$.
\[\prw(F)\eqdef\prod_{f\in F}\weight(f)\]
The probability assigned to a possible outcome $J\in\fsol_P(I)$,
denoted $\Pr_{\G,I}(J)$, is simply $\prw(J)$. If a possible outcome
$J$ does not contain any distributional facts, then $\Pr_{\G,I}(J)=1$
by definition.

\begin{example}(continued) Let $J$ be the instance that consists of
  all of the relations in Figures~\ref{fig:I} and~\ref{fig:idb}.  Then
  $J$ is a possible outcome of $I$ w.r.t.~$\G$. For convenience, in the case
  of distributional relation symbols, we have added the weight of each
  fact to the corresponding row as the rightmost attribute. This
  weight is not part of our model (since it can be inferred from the
  rest of the attributes). For presentation's sake, the sampled values
  are under the attribute name $\att{draw}$ (while attribute names are
  again external to our formal model). $\Pr_{\G,I}(J)$ is the product
  of all of the numbers in the columns titled ``$w(f)$,'' that is,
  $0.01\times 0.99\times 0.03\times\cdots\times 0.4$.
\end{example}

The following theorem states that $\Pr_{\G,I}$ is indeed a probability
space over all the possible outcomes.

\begin{theorem}\label{thm:discrete}
  Let $\G$ be a \DDatalog program, and $I$ an input instance for $\G$,
  such that $\sol_{\G}(I)=\fsol_{\G}(I)$. Then $\Pr_{\G,I}$ is a
  discrete probability function over $\sol_\G(I)$.
\end{theorem}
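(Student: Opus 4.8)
The plan is to show two things: that each $\Pr_{\G,I}(J)$ lies in $(0,1]$, and that $\sum_{J\in\sol_\G(I)}\Pr_{\G,I}(J)=1$; since $\sol_\G(I)$ is countable under the hypothesis, this makes $\Pr_{\G,I}$ a discrete probability function. The first part is immediate: by the definition of a possible outcome, every distributional fact $f=R^\delta_i(\tup a,b,\tup c,\tup p)\in J$ has $\weight(f)=\delta(b\mid\tup p)>0$, and as a value of a probability distribution it satisfies $\delta(b\mid\tup p)\le 1$; ordinary facts have weight $1$. Since $J$ is finite, $\prw(J)=\prod_{f\in J}\weight(f)$ is a finite product of numbers in $(0,1]$, hence in $(0,1]$. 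All the work is in the normalization $\sum_J\prw(J)=1$.

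To prove normalization I would organize the possible outcomes into a \emph{generation tree} $T$ and sum over its leaves. A node of $T$ is a finite instance $K$ with $I\subseteq K$ that is closed under the deterministic rules of $\widehat\G$ (types (i) and (iii)) and satisfies the functional dependencies of Proposition~\ref{prop:sol:fd}; the root is the deterministic closure of $I$. Call a ground instance of a distributional rule $\exists y\,R^\delta_i(\tup t,y,\tup t',\tup p)\leftarrow\phi(\tup x)$ \emph{active} at $K$ if its body holds in $K$ but $K$ contains no witness $R^\delta_i(\tup t,\cdot,\tup t',\tup p)$. If $K$ has an active instance, fix one by a fair selection rule, and for every value $b$ with $\delta(b\mid\tup p)>0$ give $K$ a child obtained by adding $R^\delta_i(\tup t,b,\tup t',\tup p)$ and re-closing under the deterministic rules; label that edge with the weight $\delta(b\mid\tup p)$. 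A node with no active instance is a leaf. Each node is finite, because the deterministic rules introduce no new domain values and Datalog closure over a finite active domain is finite.

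The first half of the argument is a correspondence: the leaves of $T$ are exactly the possible outcomes in $\sol_\G(I)$, and the product of the edge weights along the path from the root to a leaf $J$ equals $\prw(J)$. A leaf is a solution of $\widehat\G$ (no active instance remains) in which every distributional fact was added to witness some rule and every sample has positive probability, and it is minimal because, by the functional dependencies, each such fact is the unique witness of its rule, so removing it destroys a solution; hence every leaf is a possible outcome. Conversely, given $J\in\sol_\G(I)$, following at each node the unique child whose sample agrees with $J$ (uniqueness is exactly Proposition~\ref{prop:sol:fd}) traces a path that stays inside $J$, terminates after finitely many steps since $J$ is finite, and ends at a leaf equal to $J$ by minimality of $J$. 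Along this path the only edges of weight $\ne 1$ are those adding the distributional facts of $J$, each added once, so the path weight is $\prw(J)$.

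Finally I would sum over the leaves. The hypothesis $\sol_\G(I)=\fsol_\G(I)$ is what guarantees that $T$ is \emph{well-founded}: an infinite root-to-leaf path would produce, in the limit, an infinite minimal solution with positive-probability samples, i.e.\ an infinite possible outcome, contradicting the assumption. Writing $S(K)$ for the sum of the path weights of all leaves below $K$, grouping those leaves by the child through which they pass gives $S(K)=\sum_b \delta(b\mid\tup p)\,S(K_b)$ for internal $K$ (a rearrangement of a nonnegative series, valid by Tonelli), while $S(J)=1$ for a leaf $J$. Since $\sum_b\delta(b\mid\tup p)=1$ at every internal node, well-founded induction on $T$ yields $S(K)=1$ for every node, and in particular $S(\text{root})=\sum_{J\in\sol_\G(I)}\prw(J)=1$. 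The main obstacle is establishing well-foundedness together with the minimality claims in the leaf/outcome correspondence; the summation step is then routine given nonnegativity.
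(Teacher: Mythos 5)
Your construction is, in essence, the paper's own: your ``generation tree'' is the probabilistic chase tree of Section~\ref{sec:chase} with the deterministic steps collapsed into the nodes, your leaf/outcome correspondence is Theorem~\ref{theorem:leaf-flfp}, and your use of the functional dependencies is Proposition~\ref{prop:sol:fd} (resp.\ Proposition~\ref{prop:fd} for intermediate instances). The one point where you genuinely depart from the paper is the normalization step: the paper argues that a random walk on a tree with no infinite paths must terminate at a leaf, so the leaf weights form a probability distribution, whereas you compute the leaf mass $S(K)$ below each node by well-founded induction, using that the edge weights out of every node sum to one. Your version is more elementary and self-contained than the paper's appeal to random-walk termination, and it isolates exactly where the hypothesis $\sol_\G(I)=\fsol_\G(I)$ enters (well-foundedness of the tree).

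That said, three supporting claims are stated more weakly than what is needed, all repairable with ideas you already deploy. First, to show a leaf $J$ is a possible outcome you argue minimality by saying that removing any single fact destroys solution-hood; this does not imply minimality, since a proper subset missing two or more facts can still be a solution (take two facts, each forced only by a rule whose body contains the other: every single-fact removal breaks a rule, yet removing both yields a solution). Such a configuration cannot arise as a leaf of your tree, but ruling it out is exactly what the path-based argument does: given a hypothetical proper sub-solution $K\subsetneq J$, consider the \emph{first} fact along the construction path missing from $K$; a deterministic fact is then forced into $K$ outright, and a distributional fact is forced by the functional dependency, since $K$ must contain some witness for the fired rule and $J$ satisfies the dependency. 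This is precisely how the paper proves minimality in Theorem~\ref{theorem:maxpaths-lfp}, and it is the argument you yourself use in the converse direction. Second, your well-foundedness step asserts that the union along an infinite path is an infinite minimal solution; this is again exactly Theorem~\ref{theorem:maxpaths-lfp}, and proving it needs both the fairness of your selection rule (for solution-hood) and the same first-missing-fact argument (for minimality). Third, the final equality $S(\mathrm{root})=\sum_{J\in\sol_\G(I)}\prw(J)$ requires that no possible outcome labels two distinct leaves (otherwise it would be double-counted); this is the paper's Proposition~\ref{prop:tree}, and in your tree it follows because the selection rule fixes which rule instance is resolved at each node, so by the functional dependencies the path that stays inside a given outcome $J$ is unique. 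With these repairs your proof is correct.
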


We prove Theorem~\ref{thm:discrete} in Section~\ref{sec:chase}. In
Section~\ref{sec:measures} we consider the general case, and in
particular the generalization of Theorem~\ref{thm:discrete}, where not
all possible outcomes are guaranteed to be finite. There, if one
considers only the (countable set of all) finite possible outcomes,
then the sum of probabilities is not necessarily one. But still:

\begin{theorem}\label{thm:at-most-one}
  Let $\G$ be a \DDatalog program, and $I$ an input for $\G$. Then
  $\Sigma_{J\in\fsol_\G(I)}\Pr_{\G,I}(J)\leq 1$.
\end{theorem}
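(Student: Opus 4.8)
The plan is to build a \emph{chase tree} for $I$ w.r.t.~$\G$ and to read the left-hand sum off as the total probability mass sitting on its terminal nodes. Fix a fair chase strategy: its nodes are finite configurations (finite sets of facts), the root is the input instance $I$, and from a node $K$ one picks an active trigger, i.e.\ a rule instantiation whose premise holds in $K$ but whose conclusion does not. Firing an ordinary (non-distributional) trigger adds a single fact and produces one child; firing a distributional trigger for a relation $R^\delta_i$ with parameters $\tup p$ branches over all $b\in\supp(\delta[\tup p])$, the child that samples $b$ being joined to $K$ by an edge of weight $\delta(b\mid\tup p)$ (edges arising from ordinary steps carry weight $1$). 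For a node $v$ let $p(v)$ denote the product of the edge weights along the root-to-$v$ path, so $p(\text{root})=1$.

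The combinatorial core is a conservation-of-mass statement. For each $n$ let $F_n$ be the frontier consisting of every terminal node at depth at most $n$ together with every non-terminal node at depth exactly $n$. I claim $\sum_{v\in F_n}p(v)=1$ for all $n$, by induction on $n$. The base case is $p(\text{root})=1$. For the step, $F_{n+1}$ is obtained from $F_n$ by replacing each non-terminal node $v$ at depth $n$ by its children: if $v$ is ordinary its unique child $w$ has $p(w)=p(v)$, and if $v$ is distributional with parameters $\tup p$ then $\sum_{w}p(w)=p(v)\sum_{b\in\supp(\delta[\tup p])}\delta(b\mid\tup p)=p(v)$, because $\delta[\tup p]$ is a probability distribution. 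Since all terms are non-negative, rearranging the (possibly countably infinite) sum is harmless, and the total mass is preserved.

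Next I connect terminal nodes to finite possible outcomes. Given $J\in\fsol_\G(I)$, run the chase but resolve every distributional choice by the value dictated by $J$; this is well defined because, by Proposition~\ref{prop:sol:fd}, $J$ satisfies the functional dependency associated to each $R^\delta_i$, so the value to be used for a given tuple and parameters is unique. As $J$ is a finite solution, this run stays inside $J$, fires every active trigger, and terminates at a node whose configuration is exactly $J$; call this node $\iota(J)$, which is terminal by minimality of $J$. The weight of its path multiplies precisely the per-fact weights $\delta(b\mid\tup p)$ of the distributional facts of $J$ (ordinary facts contribute $1$), so $p(\iota(J))=\prw(J)=\Pr_{\G,I}(J)$; and $\iota$ is injective since the configuration at $\iota(J)$ recovers $J$. (When in addition $\sol_\G(I)=\fsol_\G(I)$, $\iota$ is a bijection onto the terminal nodes, recovering the equality of Theorem~\ref{thm:discrete}.)

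The terminal nodes of depth at most $n$ lie in $F_n$, so $\sum_{J:\,\mathrm{depth}(\iota(J))\le n}\prw(J)\le\sum_{v\in F_n}p(v)=1$. Letting $n\to\infty$, this increasing partial sum is bounded by $1$, whence $\sum_{J\in\fsol_\G(I)}\Pr_{\G,I}(J)\le 1$, as required. I expect the main obstacle to be making the chase tree fully rigorous in the presence of infinitely-branching distributions (such as $\distname{Poisson}$) and infinite paths: the deficit $1-\sum_J\prw(J)$ is exactly the mass that escapes along the infinite branches corresponding to infinite possible outcomes, and nailing down the terminal-node-to-outcome correspondence relies on the soundness and completeness of the chase together with Proposition~\ref{prop:sol:fd}.
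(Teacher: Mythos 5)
Your proof is correct and follows essentially the same route as the paper: a chase tree whose edge labels conserve probability mass, with $\sum_{J\in\fsol_\G(I)}\Pr_{\G,I}(J)$ bounded by the mass that reaches the leaves. The only difference is packaging: your frontier induction and the injection $\iota$ built from Proposition~\ref{prop:sol:fd} are precisely the ingredients the paper delegates to the random-walk termination argument and to Theorem~\ref{theorem:leaf-flfp}, respectively.
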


We conclude this section with some comments. First, we note that the
restriction of a conclusion of a rule to include a single
$\Delta$-term significantly simplifies the presentation, but does not
reduce the expressive power. In particular, we could simulate multiple
$\Delta$-terms in the conclusion using a collection of predicates and
rules. For example, if one wishes to have conclusion where a person
gets both a random height and a random weight (possibly with shared
parameters), then she can do so by deriving $\rel{PersonHeight}(p,h)$
and $\rel{PersonWeight}(p,w)$ separately, and using the rule
$\rel{PersonHW}(p,h,w)\leftarrow \rel{PersonHeight}(p,h),
\rel{PersonWeight}(p,w)$. We also highlight the fact that our
framework can easily simulate the probabilistic database model of
\e{independent tuples}~\cite{PDB11} with probabilities mentioned in
the database, using the $\distname{Flip}$ distribution, as follows.
Suppose that we have the EDB relation $R(x,p)$ where $p$ represents
the probability of every tuple. Then we can obtain the corresponding
probabilistic relation $R'$ using the rules $S(x,\distname{Flip}[p])
\dla R(x,p)$ and $R'(x) \dla S(x,1)$.  Finally, we note that a
\e{disjunctive} Datalog rule~\cite{DLV97}, where the conclusion can be
a disjunction of atoms, can be simulated by our model (with
probabilities ignored): If the conclusion has $n$ disjuncts, then we
construct a distributional rule with a probability distribution over
$\{1,\ldots,n\}$, and additional $n$ deterministic rules corresponding
to the atoms.

\section{Chasing Generative Programs}\label{sec:chase}

\emph{The chase}~\cite{MMS79,ABU79} is a classic technique used for
reasoning about tuple-generating dependencies and equality-generating
dependencies. In the special case of full tuple-generating
dependencies, which are syntactically isomorphic to Datalog rules, the
chase is closely related to (a tuple-at-a-time version of) the naive
\emph{bottom-up evaluation} strategy for Datalog program
(cf.~\cite{AbiteboulHV95}). In this section, we present a suitable
variant of the chase for generative Datalog programs, and analyze some
of its properties. The goal of that is twofold. First, as we will
show, the chase provides an intuitive executional counterpart of the
declarative semantics in Section~\ref{sec:generative}.
Second, we use the chase to prove Theorems~\ref{thm:discrete}
and~\ref{thm:at-most-one}.

We note that, although the notions and results could arguably be
phrased in terms of a probabilisitic extension of bottom-up Datalog
evaluation strategy, the fact that a \DDatalog rule can create new
values makes it more convenient to phrase them in terms of a suitable
adaptation of the chase procedure.

To simplify the notation in this
section, we fix a \DDatalog program $\G=(\E,\I,\Theta)$.  
Let $\widehat{\G} = (\E,\I^\Delta, \Theta^\Delta)$
be the associated \EDatalog program. 

We define the notions of \emph{chase step} and \e{chase tree}.

\partitle{Chase step} Consider an instance $J$, a rule
$\tau\in\Theta^\Delta$ of the form $\psi(\tup x)\leftarrow\varphi(\tup
x)$, and a tuple $\tup a$ such that $\varphi(\tup a)$ is satisfied in
$J$ but $\psi(\tup a)$ is not satisfied in $J$. If $\psi(\tup x)$ is a
distributional atom of the form $\exists y R^\delta_i(\tup t, y, \tup
t', \tup p)$, then $\psi$ being ``not satisfied'' is interpreted in
the logical sense (regardless of probabilities): there is no $y$ such
that the tuple $(\tup t, y, \tup t', \tup p)$ is in $R^\delta_i$. In
that case, let $\mathcal{J}$ be the
set of all instances $J_b$ obtained by extending $J$ with $\psi(\tup
a)$ for a specific value $b$ of the existential variable $y$, such
that $\delta(b|\tup p)>0$.  Furthermore, let $\pi$ be the discrete
probability distribution over $\mathcal{J}$ that assigns to $J_b$ the
probability mass $\delta(b|\tup p)$.  If $\psi(\tup x)$ is an ordinary
atom without existential quantifiers, $\mathcal{J}$ is simply defined
as $\{J'\}$, where $J'$ extends $J$ with the facts in $\psi(\tup a)$,
and $\pi(J')=1$. Then, we say that
\[J\xrightarrow{\tau(\tup a)} (\mathcal{J},\pi)\] is a \emph{valid
  chase step}.

\partitle{Chase tree}
Let $I$ be an input instance for $\G$.  A \emph{chase tree for $I$}
w.r.t.~$\G$ is a possibly infinite tree, whose nodes are labeled by
instances over $\E\cup\I$ and where each edge is labeled by a real
number $r\in [0,1]$ such that
\begin{enumerate}
\item The root is labeled by $I$;
\item For each non-leaf node labeled $J$, if $\mathcal{J}$ is the set
  of labels of the children of the node, and if $\pi$ is the map
  assigning to each $J'\in\mathcal{J}$ the label of the edge from $J$
  to $J'$, then $J\xrightarrow{\tau(\tup a)} (\mathcal{J},\pi)$ is a
  valid chase step for some rule $\tau\in\Theta^\Delta$ and tuple
  $\tup a$.
\item For each leaf node labeled $J$, there does not exist a valid
  chase step of the form $J\xrightarrow{\tau(\tup a)} (\mathcal{J},
  \pi)$.  In other words, the tree cannot be extended to a larger
  chase tree.
\end{enumerate}

We denote by $L(v)$ the label (instance) of the node $v$.  Each
instance $L(v)$ of a node of $v$ of a chase tree is said to be an
\emph{intermediate instance} w.r.t.~that chase tree.  A chase tree is
said to be \emph{injective} if no intermediate instance is the label
of more than one node; that is, for $v_1\neq v_2$ we have $L(v_1)\neq
L(v_2)$. As we will see shortly, due to the specific construction of
$\Theta^\Delta$, every chase tree turns out to be injective.

\medskip\par\noindent\textbf{Properties of the chase.}  We now state
some properties of our chase procedure.

\def\propfd{ Let $I$ be any input instance, and consider any chase
  tree for $I$ w.r.t.~$\G$. Then every intermediate instance satisfies
  all functional dependencies associated to distributional relations.
}
\begin{proposition}\label{prop:fd} 
 \propfd 
\end{proposition}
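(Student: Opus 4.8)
The plan is to prove the statement by induction on the depth of a node in the chase tree, exploiting the observation that the ``not satisfied'' precondition of a valid chase step is exactly what rules out functional-dependency violations.

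For the base case, the root is labeled by the input instance $I$, which is an instance over $\E$ alone. Since every distributional relation symbol $R^\delta_i$ belongs to $\I^\Delta$, which is disjoint from $\E$, the instance $I$ contains no distributional facts and hence vacuously satisfies every functional dependency associated to a distributional relation.

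For the inductive step, I would consider a non-leaf node labeled $J$ that, by the induction hypothesis, satisfies all these functional dependencies, and consider any child labeled by some $J' \in \mathcal{J}$ produced by a valid chase step $J \xrightarrow{\tau(\tup a)} (\mathcal{J}, \pi)$, proceeding by case analysis on the type of the rule $\tau \in \Theta^\Delta$. If $\tau$ is of type (i) or type (iii), then the fact(s) it adds belong to an ordinary relation of $\I$, so the extensions of all distributional relations $R^\delta_i$ are identical in $J$ and $J'$, and the dependencies are preserved trivially. The only interesting case is when $\tau$ is of type (ii), i.e.\ of the form $\exists y\, R^\delta_i(\tup t, y, \tup t', \tup p) \leftarrow \phi(\tup x)$; here $J'$ extends $J$ by a single fact $R^\delta_i(\tup t, b, \tup t', \tup p)$ for one value $b$, with $b$ in position $i$ and $\tup t, \tup t', \tup p$ filling the remaining attributes.

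The crux is that validity of the chase step requires the head $\psi(\tup a)$ to be \emph{unsatisfied} in $J$, which for a distributional atom means there is no value $y$ such that $R^\delta_i(\tup t, y, \tup t', \tup p)$ is present in $J$. Consequently the newly added fact agrees with no pre-existing $R^\delta_i$-fact on the full tuple of non-$i$ attributes $(\tup t, \tup t', \tup p)$, so it cannot form a violating pair with any fact already in $J$; and since only this one fact is added, no violation arises among the new facts either. Hence $J'$ still satisfies the functional dependency of $R^\delta_i$, and it satisfies the dependencies of all other distributional relations because their extensions are untouched, closing the induction. I do not anticipate a serious obstacle here: essentially all the work is in reading off the right consequence of the chase-step precondition, with the only mild care being to treat the three rule shapes in $\Theta^\Delta$ uniformly and to note that only type-(ii) rules can ever modify a distributional relation.
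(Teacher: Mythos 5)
Your proof is correct and follows essentially the same route as the paper's: induction on the distance from the root, with the key observation that a valid chase step requires the (distributional) head to be unsatisfied in $J$, so the newly added $R^\delta_i$-fact cannot agree with any existing fact on the non-$i$ attributes. Your explicit base case and case analysis over the three rule shapes in $\Theta^\Delta$ merely spell out details the paper's terser argument leaves implicit.
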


\def\proptree{ Every chase tree w.r.t.~$\G$ is injective.  }
\begin{proposition} \label{prop:tree}
\proptree
\end{proposition}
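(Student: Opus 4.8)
The plan is to combine two ingredients: that every chase step strictly enlarges the instance, and that every intermediate instance obeys the functional dependencies of Proposition~\ref{prop:fd}. Injectivity then follows from a short case analysis on the relative position of two nodes in the tree.

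First I would record a monotonicity observation about chase steps. In any valid chase step $J\xrightarrow{\tau(\tup a)} (\mathcal{J},\pi)$, every child label strictly extends $J$. For a rule of kind (i) or (iii) the unique child adds the fact $\psi(\tup a)$, which by the definition of a chase step is not satisfied in $J$; for a rule of kind (ii) each child $J_b$ adds the fact $R^\delta_i(\tup t, b, \tup t', \tup p)$, and since $\psi(\tup a)$ is not satisfied in $J$ there is no fact of $R^\delta_i$ with prefix $\tup t$ and suffix $\tup t',\tup p$ present in $J$ at all. Hence $L(v)\subsetneq L(v')$ whenever $v'$ is a child of $v$, and therefore $L(u)\subsetneq L(w)$ whenever $u$ is a proper ancestor of $w$.

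Next I would take two distinct nodes $u,v$. If one is an ancestor of the other, monotonicity gives $L(u)\neq L(v)$ at once. Otherwise let $w$ be their least common ancestor, and let $c_u,c_v$ be the children of $w$ on the paths to $u$ and $v$; since $w$ is the \emph{least} common ancestor we have $c_u\neq c_v$, so $w$ has at least two children. A chase step of kind (i) or (iii) yields a singleton $\mathcal{J}$, so the step applied at $w$ must use a distributional rule of kind (ii). Consequently the two child labels have the form $J_{b_u}$ and $J_{b_v}$ with $b_u\neq b_v$, where $J_{b_u}$ contains $R^\delta_i(\tup t, b_u, \tup t', \tup p)$ and $J_{b_v}$ contains $R^\delta_i(\tup t, b_v, \tup t', \tup p)$; these two facts agree on every attribute except position $i$, where they differ.

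Finally I would combine these observations. By monotonicity $L(u)\supseteq L(c_u)=J_{b_u}$ and $L(v)\supseteq L(c_v)=J_{b_v}$, so $L(u)$ contains $R^\delta_i(\tup t, b_u, \tup t', \tup p)$ and $L(v)$ contains $R^\delta_i(\tup t, b_v, \tup t', \tup p)$. Suppose toward a contradiction that $L(u)=L(v)$; then this common instance contains both distributional facts, which violates the functional dependency associated to $R^\delta_i$. But every intermediate instance satisfies that dependency by Proposition~\ref{prop:fd}, a contradiction, so $L(u)\neq L(v)$ and the tree is injective. The one step requiring care is this branching argument, and it is exactly where the design of $\Theta^\Delta$ pays off: branching can occur only at distributional atoms, so two diverging branches are permanently distinguished by conflicting distributional facts that the functional dependency forbids from ever coexisting in a single intermediate instance.
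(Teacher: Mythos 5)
Your proof is correct and follows essentially the same route as the paper's: locate the least common ancestor of the two nodes, observe that branching can only occur at a distributional rule of kind (ii), and then use monotonicity of the chase together with the functional dependency of Proposition~\ref{prop:fd} to derive a contradiction. Your treatment is in fact slightly more careful than the paper's, since you explicitly dispose of the case where one node is an ancestor of the other via the strict-growth observation, a case the paper's argument passes over implicitly.
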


% We will use the notation $\brg{\G}{I}$ to refer to an arbitrary chase tree 
% of $I$ with respect to $\G$.
We denote by $\leaves(T)$ the set of leaves of a chase tree $T$, and
we denote by $L(\leaves(T))$ the set $\set{L(v)\mid v\in\leaves(T)}$.

\begin{theorem}\label{theorem:leaf-flfp}
  Let $T$ be a chase tree for an input instance $I$ w.r.t.~%a \DDatalog program 
  $\G$.
  The following hold.
  \begin{enumerate}
  \item Every intermediate instance is a subset of some possible outcome in
    $\sol_\G(I)$.
    \item If $T$ does not have infinite directed paths, then
      $L(\leaves(T))=\fsol_\G(I)$.
  \end{enumerate}
\end{theorem}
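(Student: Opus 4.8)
The plan is to prove the two parts separately, in both cases exploiting the fact that a chase step fires a rule instance only when its conclusion is currently unsatisfied, together with the functional dependencies of Proposition~\ref{prop:fd}. For part~1 I would show that any intermediate instance can be completed to a possible outcome by continuing the chase in a fair manner; for part~2 I would establish the two inclusions $L(\leaves(T))\subseteq\fsol_\G(I)$ and $\fsol_\G(I)\subseteq L(\leaves(T))$, the latter by \emph{tracing} a given finite possible outcome through the tree.

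For part~1, I would fix a node $v$, set $J=L(v)$, and follow the root-to-$v$ path of $T$ and then continue with any \emph{fair} chase (repeatedly applying valid chase steps so that every rule instance that becomes and remains applicable is eventually fired). This yields a chase sequence $I=J_0\subseteq J_1\subseteq\cdots$ from $I$ with $J\subseteq\bigcup_k J_k=:F$, and I would then check that $F\in\sol_\G(I)$. First, $I\subseteq F$, and $F$ satisfies every rule of $\widehat{\G}$: if some $\tau(\tup a)$ had its premise in $F$ but its conclusion unsatisfied in $F$, then, the premise being finite, it would be applicable from some finite stage on, contradicting fairness. Second, every distributional fact of $F$ carries positive weight, since each was introduced by a valid chase step, which only uses values $b$ with $\delta(b\mid\tup p)>0$. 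Third, $F$ is minimal, which is the crucial point: assuming a solution $F'\subsetneq F$ with $I\subseteq F'$, let $J_k$ be the first stage not contained in $F'$, so the step $J_{k-1}\to J_k$ fires some $\tau(\tup a)$ and adds a single fact $g\notin F'$. Its premise holds in $J_{k-1}\subseteq F'$, so $F'$ satisfies the conclusion; if the conclusion is an ordinary atom then $g\in F'$, a contradiction, and if it is a distributional atom $\exists y\,R^\delta_i(\tup t,y,\tup t',\tup p)$ then $F'$ contains some $R^\delta_i(\tup t,b',\tup t',\tup p)$, which by the functional dependency of $R^\delta_i$ (Proposition~\ref{prop:fd}, holding at every finite stage and hence in the increasing union $F$) must equal $g$, again contradicting $g\notin F'$. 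Thus $F$ is a possible outcome containing $J$.

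For part~2, assume $T$ has no infinite directed path. The inclusion $L(\leaves(T))\subseteq\fsol_\G(I)$ is immediate from the part~1 machinery: each leaf $v$ is reached by a finite path, so $L(v)$ is finite; being a leaf it admits no valid chase step and hence satisfies all rules and is a solution; positivity and minimality follow exactly as above, using the root-to-$v$ path as the relevant chase sequence. For the reverse inclusion I would take $K\in\fsol_\G(I)$ and trace it downward from the root, maintaining the invariant $L(v)\subseteq K$: at a non-leaf node $v$ with $L(v)\subseteq K$, its chase step fires some $\tau(\tup a)$ whose premise holds in $L(v)\subseteq K$, so, $K$ being a solution, $K$ satisfies the conclusion, and I select a child still contained in $K$ — the unique child in the ordinary case, and in the distributional case the child for the witness $b'$ with $R^\delta_i(\tup t,b',\tup t',\tup p)\in K$, which is present because $\delta(b'\mid\tup p)>0$ as $K$ is a possible outcome. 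Since there is no infinite directed path, the trace reaches a leaf $v^{*}$ with $L(v^{*})\subseteq K$; as $L(v^{*})$ is a solution and $K$ is \emph{minimal}, we get $L(v^{*})=K$, so $K\in L(\leaves(T))$.

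I expect the main obstacle to be the minimality argument, i.e.\ showing that the chase never produces a superfluous fact: this is precisely where the functional dependencies associated to the distributional relations are indispensable, since an existential (distributional) conclusion could a priori be witnessed by a different value already present, and only the functional dependency rules this out. The secondary technical points are the fairness of the continued chase and the preservation of the functional dependencies under the possibly infinite union in part~1, both of which should be routine once set up.
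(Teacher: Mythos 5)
Your proposal is correct and follows essentially the same route as the paper: the paper obtains this theorem as a special case of Theorem~\ref{theorem:maxpaths-lfp} (the bijection $P\mapsto\cup P$ between maximal paths of a fair chase tree and $\sol_\G(I)$), whose proof uses exactly your three ingredients --- fairness to get rule satisfaction in the limit, validity of chase steps to get positive weights, and the functional dependencies of Proposition~\ref{prop:fd} both for minimality and for tracing a given possible outcome down the tree. If anything, your part~1 is slightly more careful than the paper's, since you explicitly construct a fair continuation from an intermediate instance, whereas the paper invokes the more general theorem without remarking that an arbitrary chase tree need not itself be fair.
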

This theorem is a special case of a more general result,
Theorem~\ref{theorem:maxpaths-lfp}, which we prove later.

\subsection{Proof of Theorems~\ref{thm:discrete}
  and~\ref{thm:at-most-one}} 
By construction, for every node of a chase tree $T$, the weights of
the edges that emanate from the node in question sum up to one.  We
can associate to each intermediate instance $L(v)$ a \emph{weight},
namely the product of the edge labels on the path from the root to
$v$. This weight is well defined, since $T$ is injective.  We can then
consider a random walk over the tree, where the probabilities are
given by the edge labels.  Then, for a node $v$, the weight of $L(v)$
is equal to the probability of visiting $v$ in this random world. From
Theorem~\ref{theorem:leaf-flfp} we conclude that, if all the possible outcomes are
finite, then $T$ does not have any infinite paths, and moreover, the
random walk defines a probability distribution over the labels of the  leaves, which
are the possible outcomes. This is precisely the probability distribution of
Theorem~\ref{thm:discrete}.  Moreover, in the general case,
$\Sigma_{J\in\fsol_\G(I)}\Pr_{\G,I}(J)$ is the probability that the
random walk terminates (at a leaf), and hence,
Theorem~\ref{thm:at-most-one} follows from the fact that this
probability (as is any probability) is a number between zero and one.

%%%%%%%%%%%%%%%%%%%%%%%%%%%%%%%%%%
%%%%%%%%%%%%%%%%%%%%%%%%%%%%%%%%%%
%\input{sections/measures.tex}
\section{Infinite Possible Outcomes}\label{sec:measures}

In the general case of a \DDatalog program, possible oucomes may be infinite,
and moreover, the space of possible outcomes may be uncountable.

\begin{example}
  We now discuss examples that show what would happen if we
  straightforwardly extended our current definition of the probability
  $\Pr_{\G,I}(J)$ of possible outcomes $J$ to infinite possible
  outcomes (where, in the case where $J$ is infinite, $\Pr_{\G,I}(J)$
  would be the limit of an infinite product of weights).

  Consider the $\DDatalog$ program defined by the rule
  $R(y,\delta[y])\leftarrow R(x,y)$ where $\delta$ is a probability
  distribution with one parameter $p$ and such that $\delta(z|p)$ is
  equal to $1$ if $z=2p$ and $0$ otherwise.  Then, $I=\{R(0,1)\}$ has
  no finite possible outcome.  In fact, $I$ has exactly one infinite
  possible outcome: $\{R(0,1)\}\cup\{R^{\delta}_2(2^i,2^{i+1},2^i)
  \mid i \geq 0\}\cup\{R(2^i,2^{i+1}) \mid i \geq 0\}$.

  Now consider the previous program extended with the rule
  $R(0,\distname{Flip}[0.5])\leftarrow Q(x)$, and consider the input
  instance $I'=\{Q(0)\}$.  Then, $I'$ has one finite possible outcome
  $J = \{Q(0),R(0,0),R^{\distname{Flip}}_2(0,0,0.5)\}$ with
  $\Pr_{\G,I'}(J)=0.5$, and another infinite possible outcome
  $J'=\{R(0,1),
  R^{\distname{Flip}}_2(0,1,0.5)\}\cup\{R^{\delta}_2(2^i,2^{i+1},2^i)
  \mid i \geq 0\}\cup\{R(2^i,2^{i+1}) \mid i \geq 0\}$ with
  $\Pr_{\G,I'}(J')=0.5$. 

  Next, consider the $\DDatalog$ program defined by
  $R(y,\delta'[y])\leftarrow R(x,y)$, where $\delta'$ is a probability
  distribution with one parameter $p$, and $\delta'(z|p)$ is equal to
  $0.5$ if $z\in\{2p, 2p+1\}$ and $0$ otherwise. Then, for
  $I=\{R(0,1)\}$, every possible outcome is infinite, and would have
  the probability 0.

  Now consider the previous program extended with the rule
  $R(0,\distname{Flip}[0.5])\leftarrow Q(x)$, and consider again the
  input instance $I'=\{Q(0)\}$.  Then $I'$ would have exactly one
  possible outcome $J$ with $\Pr_{\G,I'}(J)>0$, namely $J=\{Q(0),
  R(0,0), R^\distname{Flip}_2(0,0,.5), R^{\delta'}_2(0,0,0)\}$ where
  $\Pr_{\G,I'}(J)=0.25$.
\end{example}

\subsection{Generalization of Probabilistic Semantics}
To generalize our framework, we need to consider probability spaces
over uncountable domains; those are defined by means of measure
spaces, which are defined as follows.

%\subsection{Probability Measure Spaces}
Let $\Omega$ be a set. A \e{$\sigma$-algebra} over $\Omega$ is a
collection $\F$ of subsets of $\Omega$, such that $\F$ contains
$\Omega$ and is closed under complement and countable unions. (Implied
properties include that $\F$ contains the empty set, and that $\F$ is
closed under countable intersections.) If $\F'$ is a nonempty
collection of subsets of $\Omega$, then the closure of $\F'$ under
complement and countable unions is a $\sigma$-algebra, and it is said
to be \e{generated} by $\F'$.

A \e{probability measure space} is a triple $(\Omega,\F,\pi)$, where:
%\begin{itemize}
%\item 
\e{(1)} $\Omega$ is a set, called the \e{sample space},
%\item 
\e{(2)} $\F$ is a $\sigma$-algebra over $\Omega$,
%\item 
\e{and (3)} $\pi:\F\rightarrow[0,1]$, called a \e{probability
  measure}, is such that $\pi(\Omega)=1$, and
$\pi(\cup\E)=\sum_{e\in\E}\pi(e)$ for every countable set $\E$ of
pairwise-disjoint measurable sets.
% \end{itemize}

Let $\G$ be a \DDatalog program, and let $I$ be an input for $\G$. We
say that a sequence $\tup f=(f_1,\dots,f_n)$ of facts is a
\emph{derivation} (w.r.t.~$I$) if for all $i=1,\dots,n$, the fact
$f_i$ is the result of applying some rule of $\G$ that is not
satisfied in $I\cup \set{f_1,\dots,f_{i-1}}$ (in the case of applying
a rule with a $\Delta$-atom in the head, choosing a value
randomly). If $f_1,\dots,f_n$ is a derivation, then the set
$\set{f_1,\dots,f_n}$ is a \e{derivation set}. Hence, a finite set $F$
of facts is a derivation set if and only if $I\cup F$ is an
intermediate instance in some chase tree.

Let $\G$ be a \DDatalog program, let $I$ be an input for $\G$, and let
$F$ be a set of facts. We denote by $\csol{F}_\G(I)$ the set of all
the possible outcomes $J\subseteq \sol_\G(I)$ such that $F\subseteq J$. The
following theorem states how we determine the measure space defined by
a \DDatalog program.

\begin{theorem}\label{thm:meausre}
  Let $\G$ be a \DDatalog program, and let $I$ be an input for
  $\G$. There exists a \e{unique} probability measure space
  $(\Omega,\F,\pi)$, denoted $\mu_{\G,I}$, that satisfies all of the
  following.
    \begin{enumerate}
    \item[(i)] $\Omega=\sol_\G(I)$;
    \item[(ii)] The $\sigma$-algebra $(\Omega,\F)$ is generated from the
      sets of the form $\csol{F}_\G(I)$ where $F$ is finite;
    \item[(iii)] $\pi(\csol{F}_\G(I)) = \prw(F)$ for every derivation
      set $F$.
    \end{enumerate}
    Moreover, if $J$ is a finite possible outcome, then
    $\pi(\set{J})$ is equal to $\prw(F)$. \end{theorem}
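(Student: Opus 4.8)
The plan is to construct the measure from a single chase tree via the classical cylinder-set (path-space) construction, and then to use conditions (i)--(iii) themselves to pin it down uniquely. I would fix any chase tree $T$ for $I$ w.r.t.~$\G$; by Proposition~\ref{prop:tree} it is injective, so each node $v$ is faithfully identified with its label $L(v)=I\cup F_v$, where $F_v$ is a finite derivation set. A random walk down $T$ that follows each outgoing edge with probability equal to its label is a (possibly terminating, possibly countably branching) stochastic process, and I would realize its trajectory measure on the space $B$ of maximal root-to-leaf or root-to-infinity paths. For a node $v$, let $C_v\subseteq B$ be the set of branches through $v$; the natural pre-measure assigns $\mu_0(C_v)=\prw(F_v)$, the product of edge labels from the root to $v$. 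Finite additivity is immediate from the chase step, since the children of $v$ partition $C_v$ and their edge weights sum to $1$ (this is exactly the normalization of $\delta(\cdot\mid\tup p)$).

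The main measure-theoretic obstacle is countable additivity of $\mu_0$, which cannot be had from a naive compactness argument because the tree may branch countably infinitely (e.g.\ under $\distname{Poisson}$). I would instead obtain the trajectory measure by the Ionescu--Tulcea / Kolmogorov extension applied to the sequence of transition kernels defined by the edge weights; this yields a genuine probability measure $\nu$ on $B$ with $\nu(C_v)=\prw(F_v)$ and requires no topological regularity. Carath\'eodory extension from the algebra generated by the cylinders $C_v$ then supplies the $\sigma$-algebra and the measure on $B$. I would then transport $\nu$ to $\sol_\G(I)$ via the map $\Phi$ sending a branch to the union of the labels along it: by the generalization Theorem~\ref{theorem:maxpaths-lfp}, $\Phi$ is a bijection between branches and possible outcomes, and since $\Phi(C_v)=\csol{F_v}_\G(I)$, the generated $\sigma$-algebras correspond. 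Setting $\pi=\Phi_*\nu$ gives (i) and (ii) directly, and $\pi(\csol{F_v}_\G(I))=\prw(F_v)$ for every node $v$ of $T$.

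The subtler half of (iii) is that $\pi(\csol{F}_\G(I))=\prw(F)$ must hold for \emph{every} derivation set $F$, including those that are not labels of nodes of the particular tree $T$. This is where confluence of the chase enters: every derivation set arises along some chase tree, and reordering chase steps (the commutation property underlying Theorem~\ref{theorem:maxpaths-lfp}) lets me write $\csol{F}_\G(I)$ as the disjoint union of the cylinders $C_v$ over the antichain of nodes $v$ of $T$ at which every fact of $F$ has just been produced, with $\sum_v\prw(F_v)=\prw(F)$; finite additivity and the normalization of the distributions then give the claimed value. I expect this chase-reordering bookkeeping to be the hardest step, precisely because it is what makes the measure independent of the chosen chase.

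Finally, uniqueness follows from a standard $\pi$-$\lambda$ argument: the cylinders $\set{\csol{F}_\G(I)\mid F\text{ finite}}$ form a $\pi$-system that generates $\F$ and on which any two probability measures satisfying (iii) agree, so they agree on all of $\F$. For the last claim, if $J$ is a finite possible outcome then $F=J\setminus I$ is a finite derivation set and, since $J$ is already a minimal solution, no possible outcome properly contains it; hence $\csol{F}_\G(I)=\set{J}$, and therefore $\pi(\set{J})=\prw(F)=\prw(J)$, the last equality because every EDB fact of $I$ is ordinary and has weight $1$.
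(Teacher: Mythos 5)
Your overall architecture is the one the paper uses: build the trajectory measure on maximal paths of a chase tree via the cylinder construction, transport it to $\sol_\G(I)$ through the bijection $P\mapsto\cup P$, identify the generated $\sigma$-algebras, and pin the measure down by its values on the sets $\csol{F}_\G(I)$. Two remarks on that shared skeleton. First, you must fix a \emph{fair} chase tree, not an arbitrary one: injectivity (Proposition~\ref{prop:tree}) holds for every chase tree, but the bijection of Theorem~\ref{theorem:maxpaths-lfp} -- which you rely on to transport the measure -- requires fairness, and so does the covering argument discussed below. Second, your Ionescu--Tulcea justification of the path-space measure is fine; it is just a more explicit version of the paper's appeal to the classical Markov-chain result (Theorem~\ref{theorem:ash}).

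The genuine gap is exactly at the step you yourself flag as hardest: establishing (iii) for a derivation set $F$ that is not the label of any node of your fixed tree $T$. Your plan is to write $\csol{F}_\G(I)$ as the disjoint union of the cylinders $\cup C^T_v$ over the antichain $U$ of minimal nodes $v$ with $F\subseteq L(v)$, and then conclude $\sum_{v\in U}\prw(L(v))=\prw(F)$ by ``finite additivity and the normalization of the distributions.'' The decomposition is correct (it is essentially Lemma~\ref{lemma:cylinder-supersets}), but the concluding identity is not an additivity-plus-normalization fact: $U$ is in general countably infinite, and the identity asserts that, after conditioning the random walk on the finitely many distributional choices in $F$ coming out the right way, the walk \emph{almost surely eventually fires} every rule instance whose firing produces a fact of $F$. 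This can genuinely fail on an unfair tree, where a maximal path can postpone one of the relevant firings forever and the sum falls strictly short of $\prw(F)$; so fairness must enter the proof, and even then a real argument is needed. The paper supplies it inside the proof of Theorem~\ref{thm:chase-independent}: reweight $T$ into a tree $T''$ by giving probability $1$ to edges producing facts of $F$ and $0$ to their sibling edges, use the functional dependency of Proposition~\ref{prop:fd} to get $\prw(L(v))=w_{T''}(v)\cdot\prw(F)$ for $v\in U$, and then show $\sum_{v\in U}w_{T''}(v)=1$ because the reweighted walk, being fair, almost surely covers the derivation set $F$. Equivalently, you could follow the paper's Lemma~\ref{lemma:same-measure}: since $F$ is a derivation set it is a node label of \emph{some} fair chase tree $T'$, where the value $\prw(F)$ is immediate, and chase-independence (Theorem~\ref{thm:chase-independent}) transfers it to $T$; but that theorem is itself precisely this reweighting argument, so either way the step cannot be reduced to bookkeeping. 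A secondary soft spot of the same kind sits in your uniqueness argument: the full family $\set{\csol{F}_\G(I)\mid F\text{ finite}}$ is indeed a generating $\pi$-system, but (iii) forces agreement of two candidate measures only on the sub-family indexed by \emph{derivation sets}; to run the $\pi$-$\lambda$ argument you must either show this sub-family is itself a generating $\pi$-system (the union of two compatible derivation sets is again a derivation set -- true, but it needs the functional-dependency argument via Proposition~\ref{prop:sol:fd}), or route uniqueness through the cylinder uniqueness of Theorem~\ref{theorem:ash}, as the paper does. Your treatment of the final claim about finite possible outcomes is correct and matches the paper, modulo citing Theorem~\ref{theorem:leaf-flfp} for the fact that $J\setminus I$ is a derivation set.
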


  Observe that the items (i) and (ii) of Theorem~\ref{thm:meausre}
  describe the unique properties of the probability measure space. The
  proof will be given in the next section. The last part of the
  theorem states that our discrete and continuous probability
  definitions coincide on finite possible outcomes; this is a simple
  consequence of item (ii), since for a finite possible outcome $J$, the set
  $F=J\setminus I$ is such that $\csol{F}_\G(I)=\set{J}$, and $F$ is
  itself a derivation set (e.g., due to
  Theorem~\ref{theorem:leaf-flfp}).

\subsection{Measure Spaces by Infinite Chase}
We prove Theorem~\ref{thm:meausre} by defining and investigating
measure spaces that are defined in terms of the chase.  Consider a
\DDatalog program $\G$ and an input $I$ for $\G$.  A \e{maximal path}
of a chase tree $T$ is a path $P$ that starts with the root, and
either ends in a leaf or is infinite.  Observe that the labels
(instances) along a maximal path form a chain (w.r.t.~the
set-containment partial order).  A maximal path $P$ of a chase tree is
\e{fair} if whenever the premise of a rule is satisfied by some tuple
in some intermediate instance on $P$, then the conclusion of the rule
is satisfied for the same tuple in some intermediate instance on $P$.
A chase tree $T$ is \e{fair} (or has the \e{fairness} property) if
every maximal path is fair.  Note that every finite chase tree is
fair.  We will restrict attention to fair chase trees.  Fairness is a
classic notion in the study of infinite computations; moreover, fair
chase trees can easily be constructed, for examples, by maintaining a
queue of ``active rule firings'' (cf.~any textbook on term rewriting
systems or lambda calculus).

Let $\G$ be a \DDatalog program, let $I$ be an input for $\G$, and let
$T$ be a chase tree. We denote by $\maxpaths(T)$ the set of all the
maximal paths of $T$. (Note that $\maxpaths(T)$ may be uncountably
infinite.) For $P\in\maxpaths(T)$, we denote by $\cup P$ the union of
the (chain of) labels $L(v)$ along $P$. The following generalizes
Theorem~\ref{theorem:leaf-flfp}.

\def\theoremmaxpathslfp{ Let $\G$ be a \DDatalog program, $I$ an input
  for $\G$, and $T$ a fair chase tree. The mapping $P\rightarrow\cup
  P$ is a bijection between $\maxpaths(T)$ and $\sol_\G(I)$.  }

\begin{theorem}\label{theorem:maxpaths-lfp}
 \theoremmaxpathslfp
\end{theorem}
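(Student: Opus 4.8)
The plan is to prove Theorem~\ref{theorem:maxpaths-lfp} in two directions: first that the map $P \mapsto \cup P$ is well-defined (i.e., $\cup P \in \sol_\G(I)$), and then that it is a bijection onto $\sol_\G(I)$. For well-definedness, I would fix a fair maximal path $P$ and set $J = \cup P$. Since the labels along $P$ form a chain, $J$ is the union of an increasing chain of intermediate instances, each of which contains $I$, so $I \subseteq J$. I must check that $J$ satisfies all rules of $\widehat{\G}$: this is exactly where \emph{fairness} is used. If some rule premise $\varphi(\tup a)$ holds in $J$, then by finiteness of the premise it already holds in some intermediate instance on $P$, and fairness forces the conclusion to be satisfied in some later intermediate instance on $P$, hence in $J$. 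Thus $J$ is a solution. For minimality and the positive-probability condition, I would argue that every fact in $J$ is produced by an actual chase step (so it lies in the conclusion of a firing rule whose premise was already present), which by the construction of $\Theta^\Delta$ and an inductive argument shows no proper subset of $J$ is a solution; the positive-probability requirement holds because each distributional fact $R^\delta_i(\ldots,b,\ldots,\tup p)$ on $P$ was created by a valid chase step, which only admits $b$ with $\delta(b \mid \tup p) > 0$.

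Next I would prove injectivity. Suppose $P_1 \neq P_2$ are distinct maximal paths. Since both start at the root, they share a common prefix and then diverge at some node $v$, taking edges to distinct children $v_1$ and $v_2$. By Proposition~\ref{prop:tree} every chase tree is injective, and more importantly the two children arise from a single chase step $L(v) \xrightarrow{\tau(\tup a)} (\mathcal{J},\pi)$; the divergence means they assign \emph{different} values $b_1 \neq b_2$ to the same existential variable $y$ of the same distributional atom $R^\delta_i(\tup t, y, \tup t', \tup p)$ at the same tuple. I would then invoke Proposition~\ref{prop:fd}: every intermediate instance satisfies the functional dependency associated to $R^\delta_i$, which determines the $i$-th coordinate from the rest. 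Hence $\cup P_1$ contains $R^\delta_i(\tup t, b_1, \tup t', \tup p)$ while $\cup P_2$ contains $R^\delta_i(\tup t, b_2, \tup t', \tup p)$ with $b_1 \neq b_2$, and any instance containing both would violate the functional dependency. Since each $\cup P_k$ is a valid possible outcome and therefore satisfies this dependency, we get $\cup P_1 \neq \cup P_2$.

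For surjectivity, I would take an arbitrary possible outcome $J \in \sol_\G(I)$ and construct a maximal path $P$ in $T$ with $\cup P = J$. Starting at the root (labeled $I \subseteq J$), I would descend the tree maintaining the invariant that the current node's label is a subset of $J$. At a non-leaf node labeled $J' \subseteq J$, the chase step fires some rule $\tau(\tup a)$ whose premise holds in $J'$ and hence in $J$; since $J$ is a solution, $J$ must satisfy the conclusion, so there is a unique child whose added fact is consistent with $J$ (for distributional atoms, the unique child picking the value $b$ that $J$ assigns, which exists because $J$ satisfies the functional dependency of Proposition~\ref{prop:sol:fd} and has positive probability). Following these children yields a maximal path $P$ with every label contained in $J$, so $\cup P \subseteq J$. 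The reverse inclusion $J \subseteq \cup P$ follows from fairness together with minimality of $J$: fairness guarantees every rule applicable within $\cup P$ is eventually fired, so $\cup P$ is itself a solution contained in $J$, and minimality of $J$ forces $\cup P = J$.

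The main obstacle I anticipate is the surjectivity argument, specifically establishing $J \subseteq \cup P$ and handling the subtlety that the descent must always find a child consistent with $J$. The delicate point is that at a distributional chase step the set $\mathcal{J}$ of children ranges over \emph{all} values $b$ with $\delta(b \mid \tup p) > 0$, and I need the specific value realized in $J$ to actually appear as a child---this is guaranteed precisely because $J$ is a possible outcome (so the relevant value has positive probability) and because the step fires only when the conclusion is not yet satisfied, matching the chase-step definition. Combining this descent-consistency with the fairness-driven completeness argument, and then closing the loop via minimality of $J$, is the crux; the remaining steps are routine applications of Propositions~\ref{prop:sol:fd},~\ref{prop:fd}, and~\ref{prop:tree}.
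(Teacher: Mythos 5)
Your proposal is correct and follows essentially the same route as the paper's proof: well-definedness of $\cup P$ via fairness plus an FD-based minimality argument, injectivity via the first divergence node and Proposition~\ref{prop:fd}, and surjectivity via a descent through children whose labels stay inside $J$, closed off by fairness and the minimality of $J$. The only point where you are sketchier than the paper is the minimality of $\cup P$: the paper spells out the step you leave implicit, namely that a proper sub-solution $K$ would have to witness some distributional rule with a different sampled value, and that value together with the one chosen on $P$ would violate the functional dependency inside a common intermediate instance (using that the labels along $P$ form a chain).
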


\subsection{Chase Measures}
Let $\G$ be a \DDatalog program, let $I$ be an input for $\G$, and let
$T$ be a chase tree. Our goal is to define a probability measure over
$\sol_\G(I)$. Given Theorem~\ref{theorem:maxpaths-lfp}, we can do that
by defining a probability measure over $\maxpaths(T)$. A random path
in $\maxpaths(T)$ can be viewed as a \e{Markov chain} that is defined
by a random walk over $T$, starting from the root. A measure space for
such a Markov chain is defined by means of
\e{cylinderification}~\cite{Ash2000}. Let $v$ be a node of $T$. The
\e{$v$-cylinder} of $T$, denoted $C^T_v$, is the subset of
$\maxpaths(T)$ that consists of all the maximal paths that contain
$v$. A \e{cylinder} of $T$ is a subset of $\maxpaths(T)$ that forms a
$v$-cylinder for some node $v$. 
We denote by $C(T)$ the set of all the
cylinders of $T$.  \looseness=-1

Recall that $L(v)$ is a finite set of facts, and observe that
$\prw(L(v))$ is the product of the weights along the path from the
root to $v$.  The following theorem is a special case of a classic
result on Markov chains (cf.~\cite{Ash2000}).

\begin{theorem}\label{theorem:ash}
  Let $\G$ be a \DDatalog program, let $I$ be an input for $\G$, and
  let $T$ be a chase tree.  There exists a unique probability measure
  $(\Omega,\F,\pi)$ that satisfies all of the following.
\begin{enumerate}
\item $\Omega=\maxpaths(T)$.
\item $(\Omega,\F)$ is the $\sigma$-algebra generated from $C(T)$.
\item $\pi(C^T_v)=\prw(L(v))$ for all nodes $v$ of $T$.
\end{enumerate}
\end{theorem}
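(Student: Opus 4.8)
The plan is to recognize this as an instance of the standard cylinder-set construction (Kolmogorov/Ionescu-Tulcea style) for the measure of a Markov chain on the paths of a tree, and to verify that the hypotheses of that classical theorem are met by our chase tree $T$. First I would observe that $\maxpaths(T)$ is exactly the set of maximal paths of $T$ viewed as the trajectory space of a discrete-time Markov chain: the state at each step is the current node, the initial state is the root (with probability one), and from a non-leaf node $v$ the transition to a child $v'$ has probability equal to the edge label $\pi(L(v)\to L(v'))$, while a leaf is an absorbing state. By the definition of a valid chase step, the edge labels emanating from any node sum to one, so these are genuine transition probabilities. The cylinders $C^T_v$ play the role of the basic measurable rectangles that fix a finite prefix of the trajectory.

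The key steps are then as follows. I would let $\F$ be the $\sigma$-algebra generated by $C(T)$, and define the candidate measure on cylinders by $\pi(C^T_v)=\prw(L(v))$, noting via the remark preceding the theorem that $\prw(L(v))$ is precisely the product of the edge labels along the unique root-to-$v$ path (well-definedness of this product uses injectivity of $T$, Proposition~\ref{prop:tree}, so that each node has a unique such path and $L(v)$ determines $v$). The crucial consistency condition to check is finite additivity across the children of a node: for any node $v$ with children $v_1,\dots,v_k$, the cylinder $C^T_v$ is the disjoint union of the $C^T_{v_i}$ together with (if $v$ is itself a possible stopping point, i.e.\ a leaf) the singleton path ending at $v$; and indeed $\prw(L(v))=\sum_i \prw(L(v_i))$ because $\prw(L(v_i))=\prw(L(v))\cdot(\text{edge label }v\to v_i)$ and the edge labels sum to one. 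This shows the set function is consistent on the algebra of finite disjoint unions of cylinders. Having verified consistency, I would invoke the Carathéodory extension theorem (or directly the cited classical Markov-chain result in~\cite{Ash2000}) to extend $\pi$ uniquely from the generating semiring of cylinders to the generated $\sigma$-algebra $\F$, with $\pi(\Omega)=\pi(C^T_{\mathrm{root}})=\prw(L(\mathrm{root}))=\prw(I)=1$.

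For uniqueness I would note that the cylinders $C(T)$ form a $\pi$-system that generates $\F$ and contains $\Omega$, so any two probability measures agreeing on $C(T)$ agree on $\F$ by the standard $\pi$-$\lambda$ (Dynkin) uniqueness lemma; since properties (1)--(3) pin down the measure on exactly these generators, the measure is unique. The main obstacle I anticipate is the countable-additivity (as opposed to merely finite-additivity) step required before Carathéodory applies: because $T$ may be infinite and $\maxpaths(T)$ uncountable, one must argue that $\pi$ is countably additive on the algebra generated by cylinders. This is where the appeal to the classical theorem does real work---the standard argument establishes continuity from above at $\emptyset$ using a compactness/König's-lemma property of the cylinder topology on path space (a decreasing sequence of nonempty cylinders has nonempty intersection, since their prefixes form an infinite finitely-branching-at-each-stage chain). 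I would therefore lean on~\cite{Ash2000} for this measure-theoretic core and confine my own verification to the tree-specific bookkeeping: well-definedness of $\prw(L(v))$ via injectivity, the summation-to-one of edge weights from the chase-step definition, and the resulting finite additivity on cylinders.
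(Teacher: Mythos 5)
Your proposal is correct and takes essentially the same approach as the paper: the paper offers no standalone proof of this theorem, stating only that it is a special case of a classic result on Markov chains (citing \cite{Ash2000}), which is precisely the cylinder-set/Carath\'eodory construction you invoke, with the measure-theoretic core (countable additivity on the cylinder algebra) deferred to that reference. The additional bookkeeping you supply---edge weights summing to one by the definition of a valid chase step, well-definedness of $\prw(L(v))$ via injectivity (Proposition~\ref{prop:tree}), finite additivity over children, and uniqueness via the $\pi$-$\lambda$ lemma---is exactly the verification the paper leaves implicit, and it is sound.
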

Theorems~\ref{theorem:maxpaths-lfp} and~\ref{theorem:ash} suggest the
following definition.

\begin{definition}[Chase Probability Measure] \label{def:muT}
  Let $\G$ be a \DDatalog program, let $I$ be an input for $\G$, 
  let $T$ be a chase tree, and let $(\Omega,\F,\pi)$ be the
  probability measure of Theorem~\ref{theorem:ash}.  The probability
  measure $\mu_T$ over $\sol_\G(I)$ is the one obtained from
  $(\Omega,\F,\mu)$ by replacing every maximal path $P$ with the possible outcome
  $\cup P$.
\end{definition}

Next, we prove that the probability measure space represented by a
chase tree is independent of the specific chase tree of choice.  For
that, we need some notation and a lemma.  Let $\G$ be a \DDatalog
program, let $I$ be an input for $\G$, let $T$ be a chase tree, and
let $v$ be a node of $T$. We denote by $\cup C^T_v$ the set $\set{\cup
  P\mid P\in C^T_v}$. The following lemma is a consequence of
Proposition~\ref{prop:fd} and Theorem~\ref{theorem:maxpaths-lfp}.

\begin{lemma}\label{lemma:cylinder-supersets}
  Let $\G$ be a \DDatalog program, let $I$ be an input for $\G$, and
  let $T$ be a fair chase tree. Let $v$ be a node of $T$ and
  $F=L(v)$. Then $\cup C^T_v=\csol{F}_\G(I)$; that is, $\cup C^T_v$ is the
  set $\set{J\in \sol_\G(I)\mid L(v)\subseteq J}$. \end{lemma}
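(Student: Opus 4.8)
The plan is to prove Lemma~\ref{lemma:cylinder-supersets} by establishing the two inclusions of the set equality $\cup C^T_v=\csol{F}_\G(I)$ separately, using Theorem~\ref{theorem:maxpaths-lfp} to translate between maximal paths and possible outcomes, and Proposition~\ref{prop:fd} to control how facts accumulate along a path. First I would recall that by Theorem~\ref{theorem:maxpaths-lfp}, the map $P\mapsto\cup P$ is a bijection between $\maxpaths(T)$ and $\sol_\G(I)$, so it suffices to show that for a maximal path $P$, the condition $v\in P$ (i.e.\ $P\in C^T_v$) is equivalent to $F=L(v)\subseteq \cup P$.

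For the forward inclusion ($\subseteq$), suppose $P\in C^T_v$, so $P$ contains the node $v$. Since the labels along a maximal path form a chain with respect to set-containment (as noted before Theorem~\ref{theorem:maxpaths-lfp}), and $v$ lies on $P$, we have $L(v)\subseteq \cup P$. By Theorem~\ref{theorem:maxpaths-lfp}, $\cup P\in\sol_\G(I)$, so $\cup P$ is a possible outcome containing $F=L(v)$, i.e.\ $\cup P\in\csol{F}_\G(I)$. This direction is essentially immediate from the chain structure.

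The reverse inclusion ($\supseteq$) is the main obstacle and is where Proposition~\ref{prop:fd} does the real work. Here I take a possible outcome $J\in\csol{F}_\G(I)$, so $F=L(v)\subseteq J$, and I must produce a maximal path $P$ of $T$ with $v\in P$ and $\cup P=J$. By the bijection of Theorem~\ref{theorem:maxpaths-lfp} there is a unique maximal path $P$ with $\cup P=J$; the task is to show this $P$ must pass through $v$. I would argue by tracing $P$ from the root: the node $v$ is obtained from the root by a sequence of chase steps, each adding exactly one new fact of $F$ (using injectivity, Proposition~\ref{prop:tree}, so intermediate instances along any path are determined by their fact-sets). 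The key point is to show that the unique path $P$ realizing $J$ cannot ``diverge'' from $v$. Suppose toward a contradiction that $P$ and the root-to-$v$ path first split at some node $u$ (a common ancestor of $v$ and the branch of $P$), where the chase step at $u$ fires a rule $\tau(\tup a)$. The branch leading to $v$ chooses a particular child, determined by a value $b$ for the $\Delta$-atom's existential variable (or the unique deterministic child); since $L(v)\subseteq J$, this value $b$ and the corresponding distributional fact belong to $J$. But the branch of $P$ at $u$ chooses a \emph{different} child, hence a different value $b'\neq b$ for the same rule firing $\tau(\tup a)$. Both facts would then lie in $\cup P=J$ (the one from $u$'s $P$-child directly, and $L(v)$'s fact because $L(v)\subseteq J$), producing two distributional facts that agree on all attributes except the sampled one---exactly a violation of the functional dependency associated to the distributional relation. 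This contradicts Proposition~\ref{prop:fd}, which guarantees every intermediate instance, and hence (by taking unions and the fact that $J$ is itself a possible outcome satisfying these dependencies by Proposition~\ref{prop:sol:fd}) the outcome $J$, respects all such functional dependencies. Therefore $P$ cannot diverge before $v$, so $v\in P$ and $J=\cup P\in\cup C^T_v$.

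I expect the delicate part to be formalizing the ``first point of divergence'' argument cleanly: one must carefully use injectivity (Proposition~\ref{prop:tree}) to identify nodes with their labels, and use that a chase step at a node is by a rule firing that is \emph{not yet satisfied} in the current label, so that the competing children genuinely correspond to distinct values $b\neq b'$ of the same existential variable for the same rule instantiation. Once that is set up, Proposition~\ref{prop:fd} (equivalently Proposition~\ref{prop:sol:fd} applied to $J$) closes the contradiction immediately, since two facts $R^\delta_i(\tup a,b,\ldots)$ and $R^\delta_i(\tup a,b',\ldots)$ with $b\neq b'$ violate the functional dependency $R^\delta_i\colon(\{1,\ldots,\arity(R^\delta_i)\}\setminus\{i\})\to i$.
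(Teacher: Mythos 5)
Your proof is correct and follows exactly the route the paper intends: the paper states Lemma~\ref{lemma:cylinder-supersets} without a separate written proof, remarking only that it is a consequence of Proposition~\ref{prop:fd} and Theorem~\ref{theorem:maxpaths-lfp}, and your two-inclusion argument (the chain structure of labels plus the path--outcome bijection for the easy direction, and the functional-dependency contradiction at the first point of divergence for the hard direction) is precisely a correct elaboration of that remark. The only cosmetic point is that for the contradiction you really need the FD to hold in the possible outcome $J$ itself, i.e.\ Proposition~\ref{prop:sol:fd}, which your parenthetical already covers.
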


Using Lemma~\ref{lemma:cylinder-supersets} we can prove the following
theorem.

\def\thmchaseindependent{
Let $\G$ be a \DDatalog
  program, let $I$ be an input for $\G$, and let $T$ and $T'$ be two
  fair chase trees. Then $\mu_T=\mu_{T'}$.
}
\begin{theorem}\label{thm:chase-independent} 
\thmchaseindependent
\end{theorem}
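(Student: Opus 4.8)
The plan is to prove $\mu_T = \mu_{T'}$ by showing that two probability measures over the common sample space $\sol_\G(I)$ that agree on a generating family of sets must coincide. The natural candidate for this generating family is the collection of sets $\csol{F}_\G(I)$ where $F$ ranges over finite derivation sets. First I would observe that, by Definition~\ref{def:muT}, each $\mu_T$ is the pushforward of the cylinder measure of Theorem~\ref{theorem:ash} under the bijection $P \mapsto \cup P$ from $\maxpaths(T)$ to $\sol_\G(I)$ supplied by Theorem~\ref{theorem:maxpaths-lfp}. Under this pushforward, the image of a $v$-cylinder $C^T_v$ is exactly $\cup C^T_v$, which by Lemma~\ref{lemma:cylinder-supersets} equals $\csol{L(v)}_\G(I)$. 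Hence the measurable sets of $\mu_T$ form the $\sigma$-algebra generated by $\set{\csol{F}_\G(I) \mid F = L(v)\text{ for some node }v\text{ of }T}$, and on each such set $\mu_T(\csol{L(v)}_\G(I)) = \pi(C^T_v) = \prw(L(v))$.

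The key point is that this value $\prw(F)$ depends only on the finite derivation set $F$ and not on the tree $T$ in which $F$ appears as a node label. So I would next argue that both $\mu_T$ and $\mu_{T'}$ assign probability $\prw(F)$ to $\csol{F}_\G(I)$ for every finite derivation set $F$, giving agreement on a common family of sets. The main subtlety is that the generating families need not be literally the same collection: the node labels of $T$ and those of $T'$ may be different finite derivation sets, so the two $\sigma$-algebras are a priori generated by different subfamilies of $\set{\csol{F}_\G(I) \mid F \text{ finite derivation set}}$. To handle this I would show both $\sigma$-algebras coincide with the one generated by the full family indexed by all finite derivation sets $F$; this uses the fact (implicit in the derivation-set characterization and Theorem~\ref{theorem:maxpaths-lfp}) that any finite derivation set arises as a node label in some chase tree, together with the observation that a single fair chase tree explores every rule firing along its fair paths, so its cylinders suffice to generate every $\csol{F}_\G(I)$.

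With both measures defined on the same $\sigma$-algebra and agreeing on the generating family $\set{\csol{F}_\G(I)}$, the final step is a standard measure-theoretic uniqueness argument. I would invoke the $\pi$-$\lambda$ theorem (Dynkin's theorem): it suffices to check that the generating family is closed under finite intersection (a $\pi$-system) and that the two measures agree on it. Closure under intersection follows because $\csol{F}_\G(I) \cap \csol{F'}_\G(I) = \csol{F \cup F'}_\G(I)$, and $F \cup F'$ is again a derivation set whenever the intersection is nonempty (two finite sets of facts both contained in a common possible outcome have a union that is a derivation set, by the chase characterization and Proposition~\ref{prop:fd} ruling out functional-dependency conflicts). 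Since both measures are probability measures agreeing on this $\pi$-system, Dynkin's theorem forces them to agree on the whole generated $\sigma$-algebra, yielding $\mu_T = \mu_{T'}$.

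The hard part will be the second paragraph: reconciling the two potentially distinct generating subfamilies and confirming that every $\csol{F}_\G(I)$ for a finite derivation set $F$ is measurable in both $\sigma$-algebras with the stated value. This requires carefully using fairness to guarantee that each derivation set's cylinder is recoverable (as a countable combination of node cylinders) within a given tree, rather than only in the tree where $F$ literally labels a node. Once measurability across trees is secured, the agreement-on-a-$\pi$-system conclusion is routine.
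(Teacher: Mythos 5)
Your skeleton matches the paper's up to the final uniqueness step (the paper invokes the uniqueness clause of Theorem~\ref{theorem:ash} pulled back along the bijection of Theorem~\ref{theorem:maxpaths-lfp}, where you invoke Dynkin's $\pi$-$\lambda$ theorem; either finish would be fine, and your closure-under-intersection claim for derivation sets is correct), but there is a genuine gap at exactly the step you flag as ``the hard part,'' and it is not the bookkeeping you describe. From the pushforward construction you only know $\mu_T(\csol{F}_\G(I))=\prw(F)$ when $F=L(v)$ for some node $v$ of $T$. For a derivation set $F$ that labels a node of $T'$ but no node of $T$, writing $\csol{F}_\G(I)$ as a countable union of node cylinders of $T$ (via Lemma~\ref{lemma:cylinder-supersets}) gives measurability in $\F_T$, but not the value: it gives $\mu_T(\csol{F}_\G(I))=\sum_{u\in U}\prw(L(u))$, where $U$ is the antichain of minimal nodes $u$ of $T$ with $F\subseteq L(u)$, and the identity $\sum_{u\in U}\prw(L(u))=\prw(F)$ is precisely the substantive content of the theorem, not a consequence of ``recoverability as a countable combination of node cylinders.'' Your assertion that both measures assign $\prw(F)$ to this set ``because $\prw(F)$ depends only on $F$'' is circular: the formula $\prw(F)$ depends only on $F$, but that $\mu_T$ actually assigns this value is what must be proved; and the fact that every derivation set labels a node of \emph{some} fair chase tree only tells you about the measure attached to \emph{that} tree---which is exactly the tree-independence you are trying to establish.

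The paper closes this gap with a probabilistic argument your proposal lacks. Fix the derivation set $F=L(v')$ and factor, for each $u\in U$, $\prw(L(u))=w_{T''}(u)\cdot\prw(F)$, where $T''$ is obtained from $T$ by reweighting: every edge whose added fact lies in $F$ gets weight $1$ and its sibling edges get weight $0$ (Proposition~\ref{prop:fd} guarantees that each fact of $F$ is introduced by a unique, consistently forced choice along any path, so this factorization is legitimate). Then $\sum_{u\in U}w_{T''}(u)$ is the probability that a random walk over $T''$ eventually covers all of $F$. In $T''$, every violated rule instance involving facts of $F$ is resolved deterministically in agreement with $F$; since $F$ is a derivation set, repeatedly resolving these violations generates all of $F$; and since $T''$ inherits fairness from $T$, these resolutions eventually occur along every maximal path with probability $1$. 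Hence $\sum_{u\in U}w_{T''}(u)=1$ and $\mu_T(\csol{F}_\G(I))=\prw(F)$ for \emph{every} derivation set $F$, which is the input your $\pi$-system argument needs. Without some version of this conditioning/reweighting argument, your plan does not go through.
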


\subsection{Proof of Theorem~\ref{thm:meausre}}

We can now prove Theorem~\ref{thm:meausre}. Let $\G$ be a \DDatalog
program, let $I$ be an input for $\G$, and let $T$ be a fair chase
tree for $I$ w.r.t.~$\G$.  Let $\mu_T=(\sol_\G(I),\F_T,\pi_T)$ be the
probability measure on $\sol_\G(I)$ associated to $T$, as defined in
Definition~\ref{def:muT}.

\begin{lemma}\label{lemma:same-sigma-algebra} 
  The $\sigma$-algebra $(\sol_\G(I),\F_T)$ is generated by the sets
  of the form $\csol{F}_\G(I)$, where $F$ is finite. 
\end{lemma}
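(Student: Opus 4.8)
The plan is to unwind the definition of $\F_T$ and reduce the statement to a single set-theoretic identity. By Definition~\ref{def:muT}, the measure space $\mu_T=(\sol_\G(I),\F_T,\pi_T)$ is obtained from the measure space of Theorem~\ref{theorem:ash}---whose $\sigma$-algebra is generated by the cylinders $C(T)$---by transporting it along the map $P\mapsto\cup P$. Since Theorem~\ref{theorem:maxpaths-lfp} guarantees that this map is a bijection between $\maxpaths(T)$ and $\sol_\G(I)$, the image $\sigma$-algebra $\F_T$ is exactly the one generated by the images of the cylinders $C^T_v$. By Lemma~\ref{lemma:cylinder-supersets}, the image of $C^T_v$ under $P\mapsto\cup P$ is $\cup C^T_v=\csol{L(v)}_\G(I)$. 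Hence $\F_T$ is generated by the sets $\{\csol{L(v)}_\G(I)\mid v\text{ a node of }T\}$, and it remains to show that this generating family yields the same $\sigma$-algebra as $\{\csol{F}_\G(I)\mid F\text{ finite}\}$.

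One inclusion is immediate: every intermediate instance $L(v)$ is a finite set of facts, so each generator $\csol{L(v)}_\G(I)$ of $\F_T$ is already of the form $\csol{F}_\G(I)$ with $F$ finite; thus $\F_T$ is contained in the $\sigma$-algebra generated by the $\csol{F}_\G(I)$.

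For the reverse inclusion---the main step---I would fix an arbitrary finite set $F$ of facts and prove the identity
\[
\csol{F}_\G(I)=\bigcup_{v\,:\,F\subseteq L(v)}\csol{L(v)}_\G(I).
\]
The key observation is that the labels along any maximal path $P$ form a $\subseteq$-chain, and $F$ is finite; therefore $F\subseteq\cup P$ if and only if $F\subseteq L(v)$ for some node $v$ lying on $P$. Translating this through the bijection $P\mapsto\cup P$ and Lemma~\ref{lemma:cylinder-supersets} gives exactly the displayed equality: a possible outcome $J$ contains $F$ iff, writing $J=\cup P$, some node $v$ on $P$ already has $F\subseteq L(v)$, i.e.\ $J\in\csol{L(v)}_\G(I)$. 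Finally, the union on the right is countable, because $T$ has only countably many nodes: each node has at most countably many children (the support of each discrete distribution $\delta$ is countable, and an ordinary rule produces a single child), and every node sits at finite depth, so the node set injects into the finite sequences over a countable set. Consequently $\csol{F}_\G(I)\in\F_T$, which establishes the reverse inclusion and the lemma. The only delicate points are the chain/finiteness argument and the countability of the node set; both are routine, so I do not expect a genuine obstacle here, the substantive work having already been done in Theorem~\ref{theorem:maxpaths-lfp} and Lemma~\ref{lemma:cylinder-supersets}.
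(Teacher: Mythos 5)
Your proof is correct and follows essentially the same route as the paper's: both directions come down to Lemma~\ref{lemma:cylinder-supersets}, with the easy inclusion using $\cup C^T_v=\csol{L(v)}_\G(I)$ for finite $L(v)$, and the reverse inclusion resting on the identity $\csol{F}_\G(I)=\bigcup_{v:\,F\subseteq L(v)}\cup C^T_v$, which is exactly the countable union the paper invokes. You merely make explicit several points the paper leaves implicit---the transport of the generated $\sigma$-algebra along the bijection $P\mapsto\cup P$, the chain-plus-finiteness argument behind the union identity, and the countability of the node set---all of which are correct.
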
 

\begin{proof} 
  Let $(\sol_\G(I),\F)$ be the $\sigma$-algebra generated from
  the sets $\csol{F}_\G(I)$. We
  will show that every $\csol{F}_\G(I)$ is in $\F_T$, and that every $\cup
  C^{T}_{v}$ is in $\F$. The second claim is due to
  Lemma~\ref{lemma:cylinder-supersets}, so we will prove the first.
  So, let $\csol{F}_\G(I)$ be given. Due to
  Lemma~\ref{lemma:cylinder-supersets}, the set $\csol{F}_\G(I)$ is the countable union 
  $\cup_{u\in U}(\cup C^{T}_{u})$ where $U$ is the set of all the
  nodes $u$ such that $F\subseteq u$. Hence, $\csol{F}_\G(I)\in\F_T$.
 \end{proof}

\begin{lemma}\label{lemma:same-measure}
  For every derivation set $F$ we have $\pi_T(\csol{F}_\G(I)) =
  \prw(F)$.
\end{lemma}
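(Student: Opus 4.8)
The plan is to first prove the identity for a conveniently chosen fair chase tree in which $F$ appears explicitly as a node label, and then to transport the result to the given tree $T$ using the chase‑independence result, Theorem~\ref{thm:chase-independent}. The point is that Lemma~\ref{lemma:cylinder-supersets} together with Theorem~\ref{theorem:ash} already computes $\pi_T(\csol{F}_\G(I))$ \emph{immediately} whenever $F$ (up to adding $I$) happens to be the label of a node of the tree at hand; the only difficulty is that a general derivation set need not be realized as a node label of the particular tree $T$ fixed in the proof of Theorem~\ref{thm:meausre}.

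First I would exploit that $F$ is a derivation set: by definition $I\cup F$ is an intermediate instance of \emph{some} chase tree, so there is a derivation $f_1,\dots,f_n$ with $F=\{f_1,\dots,f_n\}$. Starting from the root $I$, I fire at the node labeled $I\cup\{f_1,\dots,f_{i-1}\}$ the valid chase step that produces $f_i$ (in the distributional case following the child whose drawn value is the one occurring in $f_i$). This yields a finite path ending at a node $v$ with $L(v)=I\cup F$. I would then extend the resulting partial tree to a \emph{fair} chase tree $T^{*}$ by the queue-based scheduling of active rule firings, while keeping $v$ as a node with $L(v)=I\cup F$. On $T^{*}$ the computation is short: applying Lemma~\ref{lemma:cylinder-supersets} to $v$, and using that every possible outcome contains $I$, gives $\cup C^{T^{*}}_{v}=\{J\in\sol_\G(I)\mid I\cup F\subseteq J\}=\csol{F}_\G(I)$. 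Since $P\mapsto\cup P$ is a bijection (Theorem~\ref{theorem:maxpaths-lfp}), the preimage of $\csol{F}_\G(I)$ under this map is \emph{exactly} the cylinder $C^{T^{*}}_{v}$, so Definition~\ref{def:muT} and Theorem~\ref{theorem:ash}(3) yield $\pi_{T^{*}}(\csol{F}_\G(I))=\pi(C^{T^{*}}_{v})=\prw(L(v))$. As $I$ consists of weight‑$1$ EDB facts disjoint from $F$, we have $\prw(L(v))=\prw(I\cup F)=\prw(F)$, whence $\pi_{T^{*}}(\csol{F}_\G(I))=\prw(F)$.

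It remains to pass from $T^{*}$ to the given tree $T$. By Lemma~\ref{lemma:same-sigma-algebra} the set $\csol{F}_\G(I)$ is measurable in both $\F_T$ and $\F_{T^{*}}$, and Theorem~\ref{thm:chase-independent} gives $\mu_T=\mu_{T^{*}}$, so $\pi_T$ and $\pi_{T^{*}}$ agree on this common $\sigma$-algebra. Therefore $\pi_T(\csol{F}_\G(I))=\pi_{T^{*}}(\csol{F}_\G(I))=\prw(F)$, as required.

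The main obstacle I expect is the first step: justifying that $F$, guaranteed only to be realizable in \emph{some} chase tree, can be realized as a node label of a \emph{fair} chase tree. I would need to check that prescribing the initial firing sequence for $f_1,\dots,f_n$ does not destroy fairness; this holds because fairness constrains only the \emph{eventual} firing of every applicable rule on every branch, and the FIFO queue can process the prescribed firings first and then revert to fair scheduling on all branches, including the siblings created along the path to $v$. A secondary point to verify carefully is that the $\mu_{T^{*}}$-preimage of $\csol{F}_\G(I)$ equals $C^{T^{*}}_{v}$ and not merely contains it, which is exactly where the injectivity in Theorem~\ref{theorem:maxpaths-lfp} is used. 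I note that a purely computational alternative—decomposing $\csol{F}_\G(I)$ as a disjoint union of cylinders over the frontier of minimal nodes of $T$ that complete $F$ and summing $\prw(L(v))$—would additionally require proving that the weights of the ``extra'' facts derived before $F$ is completed sum to $1$; the chase-independence route avoids this and is therefore preferable.
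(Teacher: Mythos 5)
Your proof follows essentially the same route as the paper's: invoke Theorem~\ref{thm:chase-independent} to reduce the claim to a conveniently crafted fair chase tree in which $F$ (together with $I$) appears as a node label, and there read off $\pi(\csol{F}_\G(I))$ as the product of the weights along the path to that node. The paper's version is much terser---it omits the fairness check for the crafted tree, the $I\cup F$ versus $F$ bookkeeping, and the use of injectivity to identify the preimage with the cylinder, all of which you correctly spell out---but the underlying argument is identical.
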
 
\begin{proof} Let $F$ be a derivation set. Due to
  Theorem~\ref{thm:chase-independent}, it suffices to prove that for
  \e{some} chase tree $T'$ it is the case that
  $\pi_{T'}(\csol{F}_\G(I))=\prw(F)$. But since $F$ is a derivation
  set, we can craft a chase tree $T'$ that has a node $v$ with
  $L(v)=F$. Then we have that $\pi_{T'}(\csol{F}_\G(I))$ is the
  product of the weights along the path to $v$, which is exactly
  $\prw(F)$.
 \end{proof}

\begin{lemma}\label{lemma:same-prob}
  Let $\mu=(\Omega,\F,\pi)$ be any probability space that satisfies
  (i)--(iii) of Theorem~\ref{thm:meausre}. Then $\mu=\mu_T$.
\end{lemma}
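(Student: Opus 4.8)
The plan is to recognize this as a textbook measure-uniqueness argument: two probability measures that agree on a generating $\pi$-system must coincide. First I would observe that $\mu$ and $\mu_T$ live on the \emph{same} measurable space. By hypothesis $\mu$ satisfies (i) and (ii), so its sample space is $\sol_\G(I)$ and its $\sigma$-algebra $\F$ is generated by the sets $\csol{F}_\G(I)$ with $F$ finite; by Lemma~\ref{lemma:same-sigma-algebra} the $\sigma$-algebra $\F_T$ of $\mu_T$ is generated by exactly the same family, so $\F=\F_T$. Hence it only remains to show that the probability measures $\pi$ and $\pi_T$ agree on $\F$.

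To apply the uniqueness theorem I would exhibit a $\pi$-system that generates $\F$ and on which $\pi$ and $\pi_T$ already coincide. The natural candidate is the family of cylinder images of the fixed fair chase tree $T$, namely $\mathcal{P}=\{\cup C^T_v\mid v\text{ a node of }T\}\cup\{\emptyset\}$. By Theorem~\ref{theorem:maxpaths-lfp} the map $P\mapsto\cup P$ is a bijection between $\maxpaths(T)$ and $\sol_\G(I)$, carrying $C^T_v$ to $\cup C^T_v$. Since in a tree any two cylinders $C^T_v,C^T_w$ are either nested (when $v,w$ lie on a common root-to-leaf path) or disjoint, their images satisfy $\cup C^T_v\cap\cup C^T_w\in\{\cup C^T_v,\cup C^T_w,\emptyset\}$, so $\mathcal{P}$ is closed under finite intersection, i.e.\ it is a $\pi$-system. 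Moreover $\mathcal{P}$ generates $\F$: transporting the $\sigma$-algebra of Theorem~\ref{theorem:ash} along the bijection (as in Definition~\ref{def:muT}) gives $\F_T=\sigma(\mathcal{P})$, and $\F_T=\F$ by the previous paragraph.

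Next I would verify agreement on $\mathcal{P}$. Both measures assign $0$ to $\emptyset$. For a node $v$, write $L(v)=I\cup F_v$, where $F_v$ is a derivation set (every node label is an intermediate instance). By Lemma~\ref{lemma:cylinder-supersets}, $\cup C^T_v=\csol{F_v}_\G(I)$, and since all facts of $I$ are ordinary we have $\prw(L(v))=\prw(F_v)$. Then $\pi_T(\cup C^T_v)=\prw(F_v)$ by Lemma~\ref{lemma:same-measure} (equivalently, by Theorem~\ref{theorem:ash} together with Definition~\ref{def:muT}), while $\pi(\cup C^T_v)=\pi(\csol{F_v}_\G(I))=\prw(F_v)$ by hypothesis (iii), since $F_v$ is a derivation set. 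Thus $\pi$ and $\pi_T$ agree on all of $\mathcal{P}$.

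Finally I would invoke the standard uniqueness theorem for measures (the Dynkin $\pi$--$\lambda$ theorem, cf.~\cite{Ash2000}): two probability measures on $(\sol_\G(I),\sigma(\mathcal{P}))$ that agree on the $\pi$-system $\mathcal{P}$ and on the whole space agree on all of $\F$. Agreement on the whole space is automatic because both are probability measures (indeed $\sol_\G(I)=\cup C^T_{\mathrm{root}}\in\mathcal{P}$ with mass $\prw(I)=1$). Therefore $\pi=\pi_T$ on $\F$, and since the sample spaces and $\sigma$-algebras already coincide, $\mu=\mu_T$. I expect the one genuinely delicate point to be showing that $\mathcal{P}$ is a $\pi$-system generating $\F$; this is exactly where the tree structure of $T$ (nested-or-disjoint cylinders) and the bijection of Theorem~\ref{theorem:maxpaths-lfp} do the real work, while the rest is bookkeeping plus the off-the-shelf $\pi$--$\lambda$ theorem.
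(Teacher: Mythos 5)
Your proof is correct and follows essentially the same route as the paper's: establish $\F=\F_T$ via Lemma~\ref{lemma:same-sigma-algebra}, use Lemmas~\ref{lemma:same-measure} and~\ref{lemma:cylinder-supersets} to show $\pi$ and $\pi_T$ agree on the cylinder images $\cup C^T_v$, and conclude by a uniqueness argument. The only difference is one of packaging: where the paper cites the uniqueness clause of Theorem~\ref{theorem:ash} transported through the bijection of Theorem~\ref{theorem:maxpaths-lfp}, you invoke the $\pi$--$\lambda$ theorem explicitly, verifying the nested-or-disjoint $\pi$-system property of tree cylinders that the paper's citation leaves implicit.
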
 
\begin{proof} Let $\mu_T=(\sol_\G(I),\F_T,\pi_T)$. Due to
  Lemma~\ref{lemma:same-sigma-algebra}, we have that $\F=\F_T$.  So it
  is left to prove that $\pi=\pi_T$. Due to
  Lemmas~\ref{lemma:same-measure} and~\ref{lemma:cylinder-supersets},
  we have that $\pi$ agrees with $\pi_T$ on the cylinder sets of
  $T$. Due to Theorems~\ref{theorem:maxpaths-lfp}
  and~\ref{theorem:ash} we get that $\pi$ must be equal to $\pi_T$ due
  to the uniqueness of $\pi_T$.
 \end{proof}

The above lemmas show that $\mu_T=(\sol_\G(I),\F_T,\pi_T)$ is a probability
measure space that satisfies (i) and (ii) of Theorem~\ref{thm:meausre}, and
moreover, that no other probability measure space satisfies (i) and
(ii).

%%%%%%%%%%%%%%%%%%%%%%%%%%%%%%%%%%
%%%%%%%%%%%%%%%%%%%%%%%%%%%%%%%%%%
%\input{sections/ppdl.tex}
\newcommand{\ppdl}{\text{PPDL\textup{[}$\Delta$\textup{]}}\xspace}
\section{Probabilistic-Programming Datalog}\label{sec:ppdl}

To complete our framework, we define \e{probabilistic-programming
  Datalog}, \e{PPDL} for short, wherein a program augments a
generative Datalog program with constraints; these constraints unify
the traditional \e{integrity constraints} of databases and the
traditional \e{observations} of probabilistic programming.

\begin{definition}[\ppdl]
  Let $\Delta$ be a finite set of parametrized numerical
  distributions. A \e{\ppdl program} is a quadruple
  $(\E,\I,\Theta,\Phi)$, where $(\E,\I,\Theta)$ is a $\DDatalog$
  program and $\Phi$ is a finite set of logical constraints over
  $\E\cup\I$.\footnote{The restriction of the language of constraints
    to a fragment with tractability (or other goodness) properties is
    beyond the scope of this paper.}
\end{definition}

\begin{example}\label{example:ppdl}
  Consider again Example~\ref{example:generative}. Suppose that we
  have the EDB relations $\rel{ReportHAlarm}$ and
  $\rel{ReportBAlarm}$ that represent reported home and business
  alarms, respectively. We obtain from the program in the example a
  $\ppdl$-program by adding the following constraints.
\begin{enumerate}
  \item $\rel{ReportHAlarm}(h)\rightarrow \rel{Alarm}(h)$
  \item $\rel{ReportBAlarm}(b)\rightarrow \rel{Alarm}(b)$
  \end{enumerate}
  Note that we use right (in contrast to left) arrows to distinguish
  constraints from ordinary Datalog rules.
\end{example}

The semantics of a \ppdl program is the posterior distribution over
the corresponding $\DDatalog$ program, conditioned on the satisfaction
of the constraint. A formal definition follows.

Let $\P=(\E,\I,\Theta,\Phi)$ be a \ppdl program, and let $\G$ be the
\DDatalog program $(\E,\I,\Theta)$. An \e{input instance} for $\P$ is
an input instance $I$ for $\G$.  We say that $I$ is a \e{legal} input
instance for $\P$ if $\{J\in\sol_{\G}(I)\mid J\models\Phi\}$ is a
measurable set in the probability space $\mu_{\G,I}$, and moreover,
its measure is nonzero. Intuitively, an input instance $I$ is legal if
it is consistent with the observations (i.e., with the conjunction of
the constraints in $\Phi$), given $\G$.  The semantics of a $\ppdl$
program is defined as follows.

\begin{definition}
  Let $\P=(\E,\I,\Theta,\Phi)$ be a \ppdl program, and let $\G$ be the
  \DDatalog program $(\E,\I,\Theta)$. Let $I$ be a legal input instance for $\P$,
and let $\mu_{\G,I}=(\sol_\G(I),\F_{\G},\pi_{\G})$. The probability space defined by
  $\P$ and $I$, denoted $\mu_{\P,I}$, is the triple
    $(\sol_\P(I),\F_\P,\pi_\P)$ where:
    \begin{itemize}
      \item $\sol_\P(I) = \{J\in\sol_\G(I)\mid J\models\Phi\}$
      \item $\F_\P=\set{S\cap \sol_\P(I)\mid S\in\F_{\G}}$
      \item $\pi_\P(S)=\pi_{\G}(S)/\pi_{\G}(\sol_\P(I))$ for every
        $S\in\F_\P$.
        \end{itemize}
   In other words,  $\mu_{\P,I}$  is %the probability space
  $\mu_{\G,I}$ conditioned on $\Phi$.
\end{definition}

% Let $\P=(\E,\I,\Theta,\Phi)$ be a \ppdl program, and let $\G$ be the
% \DDatalog program $(\E,\I,\Theta)$. An \e{input instance} for $\P$ is
% an input instance $I$ for $\G$. A \e{possible outcome} of an input
% instance $I$ w.r.t.~$\P$ is a possible outcome $J$ w.r.t.~$\G$, such
% that $J$ satisfies $\Phi$. We denote by $\ppdlsol_{\Theta|\Phi}(I)$
% the set of possible outcomes for $I$ w.r.t.~$\P$. We say that $I$ is a
% \e{legal} input instance for $\P$ if $\ppdlsol_{\Theta|\Phi}(I)$ is a
% measurable set in the probability space $\mu_{\G,I}$, and moreover,
% its measure is nonzero. Intuitively, an input instance $I$ is legal
% if it is consistent with the observations (i.e., with $\Phi$) w.r.t.~the given
% $\DDatalog$ program.
%  The semantics of a $\ppdl$ program is defined
% next.

% \begin{definition}
%   Let $\P=(\E,\I,\Theta,\Phi)$ be a \ppdl program, and let $\G$ be the
%   \DDatalog program $(\E,\I,\Theta)$.
%    Let $\mu_{\G,I}=(\sol_\G(I),\F_{\G},\pi_{\G})$. The probability space defined by
%   $\P$ and $I$, denoted $\mu_{\P,I}$, is the probability space
%   $\mu_{\G,I}$ conditioned on $\Phi$; that is, it is the triple
%     $(\ppdlsol_{\Theta|\Phi}(I),\F,\pi)$ where:
%     \begin{itemize}
%       \item $\F=\set{S\cap \ppdlsol_{\Theta|\Phi}(I)\mid S\in\F_{\G}}$
%       \item $\pi(S)=\pi_{\G}(S)/\pi_{\G}(\ppdlsol_{\Theta|\Phi}(I))$ for every
%         $S\in\F$.
%         \end{itemize}
% \end{definition}

\begin{example}
  Continuing Example~\ref{example:ppdl}, the semantics of this program
  is the posterior probability distribution for the prior of
  Example~\ref{example:generative}, under the conditions that the
  alarm is on whenever it is reported. Then, one can ask various
  queries over the probability space defined, for example, the
  probability of the fact ${Earthquake}(\val{Napa},1)$.  Observe
  that, with negation, we could also phrase the condition that an
  alarm is off unless reported.
\end{example}

We note that a traditional integrity constraint on the input (e.g.,
there are no $x$, $c_1$ and $c_2$ such that both $\rel{Home}(x,c_1)$
and $\rel{Business}(x,c_2)$ hold) can be viewed as a constraint in
$\Phi$ that holds with either probability $0$ (and then the input is
illegal) or with probability $1$ (and then the prior $\mu_{\G,I}$ is
the same as the posterior $\mu_{\P,I}$).

An important direction for future work is to establish tractable
conditions that guarantee that a given input is legal. Also, an
interesting problem is to detect conditions under which the chase is a
\e{self conjugate}~\cite{Raiffa+61}, that is, the probability space
$\mu_{\P,I}$ is captured by a chase procedure without backtracking.

%%%%%%%%%%%%%%%%%%%%%%%%%%%%%%%%%%
%%%%%%%%%%%%%%%%%%%%%%%%%%%%%%%%%%
%\input{sections/extensions.tex}
\section{Extensions and Future Work}\label{sec:extensions}

Our ultimate goal is to design a language for probabilistic
programming that possesses the inherent declarative and logical nature
of Datalog. To that end, extensions are required. In this section we
discuss some of the important future directions and challenges to
pursue. We focus on the semantic aspects of expressive power. (An
obvious aspect for future work is a practical implementation, e.g.,
corresponding sampling techniques.)

\subsection{Unbounded Number of Parameters}
It is often the case that the probability distributions have a large
number of parameters. A simple example is the \e{categorical}
distribution where a single member of a finite domain of items is to
be selected, each item with its own probability.  In that case, the
domain can be very large, and moreover, it can be determined
dynamically from the database content and be unknown to the static
program.  To support such distributions, one can associate the
distribution with a relation symbol in a schema, and here we
illustrate it for the case of a categorical distribution.

Let $R$ be a relation symbol. A \e{categorical distribution over $R$}
is associated with two attributes $i$ and $j$ of $R$ that determine
\e{dynamic parameters}: $i$ represents a possible value, and $j$
represents its probability. In addition, the distribution is
associated with a tuple $\tup g$ of attributes of $R$ that split $R$
into \e{groups} with the semantics of SQL's GROUP BY. We denote this
distribution by $\distname{Cat}^R\langle{i,j;\tup g}\rangle(\tup
x;\tup q)$, where $\tup g$ and $\tup q$ have the same length. Given a
relation $R^I$ over $R$ and parameters $\tup q$, let $R^I_{\tup g}$ be
the sub-relation of $R^I$ that have the values in the attributes
vector $\tup g$ are equal to $\tup q$. Suppose that the facts of
$R^I_{\tup q}$ are $f_1,\dots,f_n$. We assume that $f_k[j]\in[0,1]$
for all $k=1,\dots,n$, and moreover, that $\sum_{k=1}^nf_k[j]=1$. Then
we define $\distname{Cat}^R\langle{i,j|\tup g}\rangle(x | \tup q)=s$,
where $s$ is the sum of $f_k[j]$ over all $k\in\set{1,\dots,n}$ with
$f_k[i]=x$.

As an example, consider the relation $\rel{Cor}(w_c,w_e,q)$ that
provides, for every English word $w_c$, a distribution over the
possible misspelled words $w_e$; hence, the fact
$\rel{Cor}(w_c,w_e,q)$ means that $w_c$ is misspelled into $w_e$ with
probability $q$. In our notation, this distribution will be captured
by the notation
$\distname{Cat}^{\rel{Cor}}\langle{2,3;1}\rangle(w_e;w_c)$. Hence, the
following program contains, for each document, the set of words that
the document can contain by replacing each word with a corresponding
correction. The relation $\rel{Doc}(d,i,w)$ denotes that document $d$
has the word $w$ as its $i$th token. The relation $\rel{CDoc}$ is the
same as $\rel{Doc}$, except that each word is replaced with a random
correction. \begin{align}\label{eq:cat}
  \rel{CDoc}(d,i,\distname{Cat}^{\rel{Cor}}\langle{2,3;1}\rangle(w_c;w))
  \dla \rel{Doc}(d,i;w)
\end{align}

The above specification for the categorical distribution is similar to
the \verb=repair-key= operation of the world-set
algebra~\cite{WSA07,PWSA07}.  In the general case, we define an
\e{$R$-distribution} to be one of the form $\delta^R\langle{\tup
  d;\tup g\rangle}(x;\tup q)$, where $R$ is a relation symbol, $\tup
d$ is a tuple of attributes of $R$, representing the dynamic
parameters (i.e., separate parameters for every row), $\tup g$ is a
tuple of grouping attributes, and $\tup q$ has the same length as
$\tup g$. We could also add a tuple $\tup p$ of \e{static parameters}
(i.e., ones that are shared by all the tuples in the group).

Extending our semantics to support a program such as~\eqref{eq:cat} is
straightforward, since the relation $\rel{Cor}$ that defines the
distribution is an EDB relation. The only difference from our previous
definitions is that, now, the probability of a distributional fact is
determined not only by the fact itself, but also by the content of the
relation to which the fact refers to (e.g., $\rel{Cor}$). When the
relation $R$ that is used for defining the distribution is an IDB, we
need to be careful with the definition of the chase, since when we
wish to sample from an $R$-distribution, it may be the case that some
of the relevant facts of $R$ have not been produced yet.  We may even
be in a cyclic situation where the facts of $R$ are determined by
facts derived (indirectly) from facts that are produced using
$R$-distributions. We plan to devise a (testable) acyclicity condition
that avoids such cycles, and then restrict the chase to behave
accordingly; $R$-distributions are sampled from only after $R$ is
complete. Of course, it would be interesting to explore the semantics
of programs without the acyclicity property.

\subsection{Multivariate Distributions}
A natural and important extension is to support \e{multivariate
  distributions}, which are distributions with a support in
$\allreals^k$ for $k>1$. Examples of popular such distributions are
multinomial, Dirichlet, and multivariate Gaussian distribution.  When
$k$ is fixed, one can replace our single distributional term with
multiple such terms. But when $k$ is unbounded, such a distribution
should be supported as an \e{aggregate} operation that implies in a
\e{set} of facts (rather than a single one).

\subsection{Continuous Distributions}

A natural extension of our framework is the support of continuous
probability distributions (e.g., continuous uniform, Pareto, Gaussian,
Dirichlet, etc.).  This is a very important extension, as such
distributions are highly popular in defining statistical
models. Syntactically, this extension is straightforward: we just need
to include these distributions in $\Delta$. Likewise, extending the
probabilistic chase is also straightforward.

The challenge, though, is with the semantic analysis, and in
particular, with the definition of the probability space implied by
the chase. When a chase step involves a continuous numeric
distribution, such as $U(0,1)$ (the uniform distribution between $0$
and $1$), then no chase step is measurable, and hence, we can no
longer talk about the probability of a step or the probability of a
cylinder set (but we can talk about the \e{density} of those).  Note
that our definition of the measure space in Section~\ref{sec:measures}
is inherently based on the assumption that the set of possible outcomes that
contains a given finite set of facts is measurable.  But to support
continuous distribution, the definition of measurable sets will need
to be based on sets of paths. We refer this to future work, and we
believe that our current framework will naturally extend to continuous
distributions.

%%%%%%%%%%%%%%%%%%%%%%%%%%%%%%%%%%
%%%%%%%%%%%%%%%%%%%%%%%%%%%%%%%%%%
%\input{sections/conclusions.tex}
\section{Concluding Remarks}\label{sec:conclusions}

We proposed and investigated a declarative framework for specifying
statistical models in the context of a database, based on an extension
of Datalog with numerical distributions. The framework differs from
traditional probabilistic programming languages not only due to the
tight integration with a database, but also because of its fully
declarative rule-based language: the interpretation of a program is
independent under transformations (such as reordering or duplication
of rules) that preserve the first-order semantics. This was achieved
by treating a \DDatalog program as a Datalog program with
existentially quantified variables in the conclusion, and using the
minimal solutions as the sample space of a (discrete or continuous)
probability distribution.  Using a suitable notion of chase that we
introduced, we established that the resulting probability
distributions are well-defined and robust.

This work is done as part of the effort to extend the LogicBlox
database~\cite{LB}, and its Datalog-based data management language
LogiQL~\cite{LogiQL14}, to support the specification of statistical
models. Through its purely declarative rule-based syntax, such an
extension of LogiQL allows for natural specifications of statistical
models. Moreover, there is a rich literature on (extensions of)
Datalog, and we expect that, through our framework, techniques and
insights from this active research area can be put to use to advance
the state of the art in probabilistic programming.

%%%%%%%%%%%%%%%%%%%%%%%%%%%%%%%%%%
%%%%%%%%%%%%%%%%%%%%%%%%%%%%%%%%%%
% A category with the (minimum) three required fields
%\category{H.4}{Information Systems Applications}{Miscellaneous}
%A category including the fourth, optional field follows...
%\category{D.2.8}{Software Engineering}{Metrics}[complexity measures, performance measures]

%\terms{Delphi theory}

%\keywords{ACM proceedings, \LaTeX, text tagging}

\section*{Acknowledgments}
We are thankful to Molham Aref, Todd J.~Green and Emir Pasalic for
insightful discussions and feedback on this work.  We also thank
Michael Benedikt, Georg Gottlob and Yannis Kassios for providing
useful comments and suggestions. Finally, we are grateful to Kathleen
Fisher and Suresh Jagannathan for including us in DARPA's PPAML
initiative; this work came from our efforts to design a principled
approach to translating probabilistic programs into statistical
solvers.

{
%\small
%\scriptsize
\bibliographystyle{abbrv}
\bibliography{ppdl-arxiv}  
}
%\newpage
\appendix

%%%%%%%%%%%%%%%%%%%%%%%%%%%%%%%%%
%%%%%%%%%%%%%%%%%%%%%%%%%%%%%%%%%
%\input{sections/proofs.tex}
\section{Additional Proofs}

\def\refthemweaklyacyclic{\ref{thm:weakly-acyclic}}
\subsection{Proof of Theorem~\refthemweaklyacyclic}
\begin{reptheorem}{\ref{thm:weakly-acyclic}}
\thmweaklyacyclic
\end{reptheorem}

\begin{proof}
It is easy to show that if $\G$ is weakly acyclic in our sense, 
then $\widehat{P}$ is weakly acyclic according to the classic 
definition of weak ayclicity given in \cite{DBLP:conf/icdt/FaginKMP03}. 
We then apply the following result (restated here to match our notation), 
which was established in \cite{DBLP:journals/tods/FuxmanKMT06}:
if a \EDatalog program $\D$ is weakly acyclic, then there is a 
polynomial $p(\cdot)$ (depending only on $\D$) such that 
for every input instance $I$ and for every solution $J$
of $I$ w.r.t.~$\D$, there is a solution $J'$ of $I$ w.r.t.~$\D$
with $J'\subseteq J$ and $|J'|\leq p(|I|)$. In particular, 
 all $J\in \lfp_\D(I)$ are finite and have size
at most $p(|I|)$. 
\end{proof}

\def\refpropfd{\ref{prop:fd}}
\subsection{Proof of Proposition~\refpropfd}
\begin{repproposition}{\ref{prop:fd}}
\propfd
\end{repproposition}
\begin{proof} The proof proceeds by induction on the distance from the
  root of the chase tree. Suppose that in a chase step
  $J\xrightarrow{\tau(\tup a)} (\mathcal{J},\pi)$, some
  $J'\in\mathcal{J}$ contains two $R^\delta_i$-facts that are
  identical except for the $i$-th attribute. Then either $J$ already
  contains both atoms, in which case we can apply our induction
  hypothesis, or $J'$ is obtained by extending $J$ with one of the two
  facts in question, in which case, it is easy to see that the
  conclusion of $\tau$ was already satisfied for the tuple $\tup a$,
  which is not possible in case of a valid chase step.
\end{proof}

\def\refproptree{\ref{prop:tree}}
\subsection{Proof of Proposition~\refproptree}

\begin{repproposition}{\ref{prop:tree}}
\proptree
\end{repproposition}

\begin{proof}
  For the sake of a contradiction, assume that two nodes $n_1$ and
  $n_2$ in a chase tree are labeled by the same instance $J$.  Let
  $n_0$ be the node that is the least common ancestor of $n_1$ and
  $n_2$ in the tree, and let $n'_1$ and $n'_2$ be the children of
  $n_0$ that are ancestors of $n_1$ and $n_2$, respectively. By
  construction, $n'_1$ and $n'_2$ are labeled with distinct instances
  $J_1 \neq J_2$, respectively. Consider the rule $\tau = \psi(\tup
  x)\leftarrow \varphi(\tup x)$ and tuple $\tup a$ constituting the
  chase step applied at node $n_0$.  Since $n_0$ has more than one
  child, $\psi(\tup x)$ must be a distributional atom, say $\exists y
  R^\delta_i(\tup t, y, \tup t', \tup p)$.  Then each $J_k$ ($k=1,2$)
  contains an $R^\delta_i$-fact. Moreover, the two $R^\delta_i$-facts
  in question differ in the choice of value for the variable $y$, and
  are otherwise identical. Due to the monotonic nature of the chase,
  both atoms must belong $J$, and hence, $J'$ violates the functional
  dependency of Proposition~\ref{prop:fd}.  Hence, we have reached a
  contradiction.
\end{proof}

\def\reftheoremmaxpathslfp{\ref{theorem:maxpaths-lfp}}
\subsection{Proof of Theorem~\reftheoremmaxpathslfp}
\begin{reptheorem}{\ref{theorem:maxpaths-lfp}}
 \theoremmaxpathslfp
\end{reptheorem}
\begin{proof}
  We first prove that every $\cup P$ is in $\sol_\G(I)$.  Let
  $P\in\maxpaths(T)$ be given. We need to show that $\cup
  P\in\sol_\G(I)$.  By definition it is the case that every
  distributional fact of $\cup P$ has a nonzero probability. It is
  also clear that $\cup P$ is consistent, due to the fairness property
  of $T$. Hence, it suffices to prove that $\cup P$ is a \e{minimal}
  solution, that is, no proper subset of $\cup P$ is
  a solution. So, let $K$ be a strict subset of $\cup P$ and suppose,
  by way of contraction, that $K$ is also a solution. Let $(J,J')$
  be the first edge in $P$ such that $\cup P$ contains a fact that is
  not in $K$. Now, consider the chase step that leads from $J$ to
  $J'$.  Let $f$ be the unique fact in $J'\setminus J$. Then
  $J\subseteq K$ and $f\in J'\setminus K$. The selected rule $\tau$ in
  this step cannot be deterministic, or otherwise $K$ must contain $f$
  as well. Hence, it is a distributional rule, and $f$ has the form
  $R^\delta_i(\tup a| \tup p)$. But then, $K$ satisfies this rule, and
  hence, $K$ must include a fact $f'=R^\delta_i(\tup a'| \tup p)$,
  where $\tup a'$ differs from $\tup a$ only in the $i$th element. And
  since some node in $\cup P$ contains both $f$ and $f'$, we get a
  violation of the fd of Proposition~\ref{prop:fd}. Hence, a
  contraction.
  
  Next, we prove that every possible outcome $J$ in $\sol_\G(I)$ is equal to
  $\cup P$ for some $P\in\maxpaths(T)$.  Let such $J$ be given. We
  build the path $P$ inductively, as follows. We start with the root,
  and upon every node $v$ we select the next edge to be one that leads
  to a subset $K$ of $J$; note that $K$ must exist since $J$ resolves
  the rule violated in $L(v)$ by some fact, and that fact must be in
  one of the children of $v$. Now, $\cup P$ is consistent since $T$ is
  fair, and $\cup P\subseteq J$ by construction. And since $J$ is a
  minimal solution, we get that $\cup P$ is in fact equal to $J$.
  
  Finally, we need to prove that if $\cup P_1=\cup P_2$ then
  $P_1=P_2$. We will prove the contrapositive statement. Suppose that
  $P_1,P_2\in\maxpaths(T)$ are such that $P_1\neq P_2$. The two paths
  agree on the root. Let $J$ be the first node in the paths such that
  the two paths disagree on the outgoing edge of $J$. Suppose that
  $P_1$ has the edge from $J$ to $J_1$ and $P_2$ has an edge from $J$
  to $J_2$. Then $J_1\cup J_2$ have a pair of facts that violate the
  functional dependency of Proposition~\ref{prop:fd}, and in
  particular, $J_1\not\subseteq\cup P_2$. We conclude that $\cup
  P_1=\cup P_2$, as claimed.
\end{proof}

\def\refthmchaseindependent{\ref{thm:chase-independent}}
\subsection{Proof of Theorem~\refthmchaseindependent}
\begin{reptheorem}{\ref{thm:chase-independent}}
\thmchaseindependent
\end{reptheorem}
\begin{proof}
  Let $\mu_T=(\Omega,\F,\pi)$ and $\mu_{T'}=(\Omega',\F',\pi')$.  We
  need to prove that $\Omega=\Omega'$, $\F=\F'$ and $\pi=\pi'$. We
  have $\Omega=\Omega'$ due to Theorem~\ref{theorem:maxpaths-lfp}.  To
  prove that $\F=\F'$, it suffices to prove that every $\cup C^T_v$ is
  $\F'$ and every $\cup C^{T'}_{v'}$ is in $\F$ (since both
  $\sigma$-algebras are generated by the cylinders). And due to
  symmetry, it suffices to prove that $C^{T'}_{v'}$ is in $\F$.  So,
  let $v'$ be a node of $T'$. Recall that $L(v')$ is a set of facts.
  Due to Lemma~\ref{lemma:cylinder-supersets}, we have that $\cup
  C^{T'}_{v'}$ is precisely the set of all possible outcomes $J$ in
  $\sol_\G(I)$ such that $L(v')\subseteq J$. Let $U$ be the set of all
  the nodes of $u$ of $P$ with $L(v')\subseteq L(u)$.  Then, due to
  Theorem~\ref{theorem:maxpaths-lfp} we have that $\cup
  C^{T'}_{v'}=\cup_{u\in U}(\cup C^T_u)$. Observe that $U$ is
  countable, since $T$ has only a countable number of nodes (as every
  node is identified by a finite path from the root). Moreover,
  $(\Omega,\F)$ is a closed under countable unions, and therefore,
  $\cup_{u\in U}(\cup C^T_u)$ is in $\F$.

  It remains to prove that $\pi=\pi'$. By now we know that the
  $\sigma$-algebras $(\Omega,\F)$ and $(\Omega',\F')$ are the same.
  Due to Theorem~\ref{theorem:maxpaths-lfp}, every measure space over
  $(\Omega,\F)$ can be translated into a measure space over the
  cylinder algebra of $T$ and $T'$. So, due to the uniqueness property
  of Theorem~\ref{theorem:ash}, it suffices to prove that every $\cup
  C^{T'}_{v'}$ has the same probability in $\mu_T$ and
  $\mu_{T'}$. That is, $\pi(\cup C^{T'}_{v'})=\prw(L(v'))$. We do so
  next. We assume that $v'$ is not the root of $T'$, or otherwise the
  claim is straightforward.  Let $U$ be the set of all the nodes $u$
  in $T$ with the property that $L(v')\subseteq L(u)$ but
  $L(v')\not\subseteq L(p)$ for the parent $p$ of $u$. Due to
  Lemma~\ref{lemma:cylinder-supersets} we have the following:
  \begin{equation}\label{eq:sum-of-paths}
    \pi(\cup C^{T'}_{v'})=\sum_{u\in U}\prw(L(u))
    \end{equation}
    Let $E$ be the set of all the edges $(v_1,u_1)$ of $T$, such that
    $L(u_1)\setminus L(v_1)$ consists of a node in $v'$.  Let $Q$ be
    the set of all the paths from the root of $T$ to nodes in $U$.
    Due to Proposition~\ref{prop:fd}, we have that every two paths
    $P_1$ and $P_2$ in $Q$ and edges $(v_1,u_1)$ and $(v_2,u_2)$ in
    $P_1$ and $P_2$, respectively, if both edges are in $E$ and
    $v_1=v_2$, then $u_1=u_2$.  Let $T''$ be the tree that is obtained
    from $T$ by considering every edge $(v_1,u_1)$ in $E$, changing
    its weight to $1$, and changing the weights of remaining
    $(v_1,u'_1)$ emanating from $v_1$ to $0$. Then we have the
    following for every node $u\in U$.
    \begin{equation}\label{eq:product-w}
      \prw(u)= w_{T''}(u)\cdot \prw(L(v'))
      \end{equation}
      where $w_{T''}(u)$ is the product of the weights along the 
      path from the root of $T''$ to $u$.
        Combining \eqref{eq:sum-of-paths} and \eqref{eq:product-w}, we get the following.
        \[\pi(\cup C^{T'}_{v'})=\prw(L(v'))\cdot\sum_{u\in U} w_{T''}(u)\]  
        Let $p=\sum_{u\in U} w_{T''}(u)$. We need to prove that
        $p=1$. Observe that $p$ is the probability of visiting a node
        of $U$ in a random walk over $T''$ (with the probabilities
        defined by the weights). Equivalently, $p$ is the probability
        that random walk over $T''$ eventually sees all of the facts
        in $v'$. But due to the construction of $T''$, every rule
        violation that arises due to facts in both $L(v')$ and any
        node of $T''$ is deterministically resolved exactly as in
        $L(v')$. Moreover, since $L(v')$ is obtained from a chase
        derivation (i.e., $L(v')$ is a derivation set), solving all
        such rules repeatedly results in the containment of
        $L(v')$. Finally, since $T''$ is fair (because $T$ is fair),
        we get that every random walk over $T''$ eventually sees all
        of the facts in $L(v')$. Hence, $p=1$, as claimed.
\end{proof}

\end{document}